\newtheorem{lemma}{Lemma}[section]
\newtheorem{proposition}{Proposition}[section]
\newtheorem{thm}{Theorem}[section]
\def\Var{\textrm{Var}} 
\def\E{\mathbb{E}}
\def\P{\mathbb{P}}
\def\text#1{\mbox{\rm #1}}
\def\bd{\boldsymbol}
\newcommand{\argmin}{\mathop{\rm argmin}}
\newcommand{\R}{\mathbb{R}}
\begin{document}

\title{Clustering High-dimensional Data via Feature Selection}

\author{Tianqi Liu$^{1}$,
Yu Lu$^{2}$, Biqing Zhu$^{3}$, and Hongyu Zhao$^{3}$ \\
$^{1}$Google Research, $^{2}$Two Sigma, $^{3}$Yale University
}
\maketitle
\begin{abstract}
High-dimensional clustering analysis is a challenging problem in statistics and machine learning, with broad applications such as the analysis of microarray data and RNA-seq data. In this paper, we propose a new clustering procedure called Spectral Clustering with Feature Selection (SC-FS), where we first obtain an initial estimate of labels via spectral clustering, then select a small fraction of features with the largest R-squared with these labels, i.e., the proportion of variation explained by group labels, and conduct clustering again using selected features. Under mild conditions, we prove that the proposed method identifies all informative features with high probability and achieves minimax optimal clustering error rate for the sparse Gaussian mixture model.  Applications of SC-FS to four real world data sets demonstrate its usefulness in clustering high-dimensional data. 
\end{abstract}

\section{Introduction \label{sec:intro}}
Consider a high-dimensional clustering problem, where we observe $n$ vectors $Y_i\in \R^p, i=1,2,\cdots,n,$ from $k$ clusters with $p>n$.  The task is to group these observations into $k$ clusters such that the observations within the same cluster are more similar to each other than those from different ones. 

Several statistical methods have been proposed to tackle the high-dimensional clustering problem \citep{pan2007penalized,guo2010pairwise,krishnamurthy2011high,witten2012framework,wu2016new,jin2016influential,song2011fast,dash2000feature,xing2001cliff,chakraborty2020entropy, liu2022characterizing, kriegel2009clustering}. A popular choice is to add regularization to encourage sparsity: \citet{pan2007penalized} added $L_1$ penalty on the cluster mean of each feature, \citet{guo2010pairwise} used pairwise group-fusion penalty to reduce the difference between different groups, \citet{witten2012framework} developed sparse $k$-means and sparse hierarchical clustering via sparse weighted loss of each feature. While the numerical results of these methods were promising, there was no theoretical justification of these methods. Besides enforcing sparsity, several works propose to cluster on latent space via matrix factorization, tensor decomposition or random projection \citep{rohe2011spectral,  liu2022characterizing, kriegel2009clustering, fern2003random}. Another way to address the high dimensionality is through feature selection \citep{chormunge2018correlation, xing2001cliff, dash2000feature}. High-dimension feature screening has been well studied under supervised learning \citep{fan2008sure, fan2009ultrahigh, balasubramanian2013ultrahigh, liu2016ultrahigh}. For unsupervised learning, \citet{jin2016influential} proposed Influential Features PCA (IFPCA), in which they considered selecting influential features by Kolmogorov-Smirnov (KS) scores. They obtained consistency clustering under the sparse Gaussian mixture model.  However,  their convergence rate is far from the optimal exponentially small clustering error. And the computational cost of calculating KS scores is relatively high. 

In this paper, we propose a computationally efficient and provably optimal method to solve high-dimensional clustering problem. Our approach is motivated from recent progress in single cell RNA sequencing (scRNA-seq) data analysis \citep{patel2014single, zeisel2015cell, chen2018viper, zamanighomi2018unsupervised, su2021accurate, hao2021integrated}. When clustering cell types from the same tissue, it is natural to assume that most of the genes are not differentially expressed and only cell-type specific genes can be informative on identifying cell types. We can use pseudo labeling techniques \citep{lee2013pseudo} and select informative features on the psuedo labels. Formally, our approach consists of three stages, in which we first obtain an initial estimate of the labels by spectral clustering, and we then select informative features using $R$-squared of univariate regressions on estimated labels, and finally run spectral clustering with Lloyd's iterations on the selected features. Under mild conditions, we show that the proposed algorithm can successfully identify all informative features. More specifically, given any consistent initial estimate of labels, the second stage of our algorithm selects all informative features with over-whelming probability under the sparse Gaussian mixture model. With those informative features, we are able to run Lloyd iterations in stage three to achieve the optimal mis-clustering rate \cite{lu2016statistical}. More specifically, we show that 

\begin{thm} \label{thm:intro} [Informal]
Under mild sample size and signal-to-noise ratio conditions, our three-stage algorithm achieves an exponentially small mis-clustering rate, which is minimax optimal up to constant in the exponent, w.h.p.
\end{thm}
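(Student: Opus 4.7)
The plan is to control each of the three stages of SC--FS separately and then combine the bounds via a single union bound. Stage~1 produces an initial label estimate $\wh z^{(0)}$ by vanilla spectral clustering on all $p$ coordinates, Stage~2 screens features by ranking univariate $R^{2}$ statistics against $\wh z^{(0)}$, and Stage~3 re-runs spectral clustering with Lloyd refinement on the selected coordinates. The natural decomposition is: on the event $\calE_{1}$ that $\wh z^{(0)}$ is consistent (mis-clustering rate $\varepsilon_{n} = o(1)$), on the event $\calE_{2}$ that the screened set $\wh S$ equals the true informative support $S^{\ast}$, and on the event $\calE_{3}$ that Lloyd iterations on $S^{\ast}$ converge to the optimum, the overall error inherits the rate from Stage~3 alone.

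For Stage~1, I would import an existing high-dimensional spectral clustering guarantee for the sparse Gaussian mixture (e.g.\ the consistency result underlying IFPCA in Jin and Wang, or a direct eigenvector perturbation argument based on Davis--Kahan applied to $Y^{\top}Y/n$). The only thing required here is that $\varepsilon_{n}\to 0$ with probability $1-o(1)$ under the stated sample-size and SNR conditions; sharp exponential rates are not needed at this stage.

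Stage~2 is where the main work lies. For coordinate $j$, write the $R^{2}$ computed from $\wh z^{(0)}$ as the ratio of a between-group sum of squares to a total sum of squares. I would expand this quantity around its analogue computed from the true labels $z$ and show that the perturbation is $O(\varepsilon_{n})\cdot(\text{signal}_{j}+\text{noise}_{j})$ uniformly in $j$. For $j\in S^{\ast}$, the oracle between-group variance is of order $\Delta_{j}^{2}$, which must dominate the perturbation and the fluctuations of the empirical noise; for $j\notin S^{\ast}$, the oracle between-group variance is zero and the empirical $R^{2}$ concentrates at $(k-1)/n$ up to logarithmic factors via standard $\chi^{2}$ tail bounds. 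A union bound over all $p$ coordinates then shows that, under the assumed signal strength $\min_{j\in S^{\ast}}\Delta_{j}^{2}\gtrsim(\log p)/n$ (with slack for the $\varepsilon_{n}$ term), the threshold separates $S^{\ast}$ from its complement with probability $1-o(1)$. Handling the label-perturbation cross term uniformly in $j$, without wasting signal, is the delicate piece; I would likely use a Hanson--Wright type bound to control it.

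Given $\calE_{1}\cap\calE_{2}$, the problem reduces to clustering $n$ Gaussian samples living in $|S^{\ast}|$ coordinates with a known consistent initializer. I would directly quote the Lloyd-iteration analysis of Lu and Zhou (2016, cited in the paper) on the reduced data: with a consistent spectral warm start, a constant number of Lloyd updates achieves mis-clustering rate $\exp\pth{-(1+o(1))\,\mathrm{SNR}/2}$, which matches the minimax lower bound up to the constant in the exponent. A final union bound over $\calE_{1},\calE_{2},\calE_{3}$ yields the claimed w.h.p.\ statement. The hardest step is clearly the uniform $R^{2}$ analysis in Stage~2, because it must simultaneously tolerate the non-optimal error of $\wh z^{(0)}$, survive a union bound over $p$ coordinates, and leave enough margin for exact support recovery rather than approximate recovery.
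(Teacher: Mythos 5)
Your three-stage decomposition (consistent initializer, exact support recovery, Lloyd refinement) is exactly the architecture of the paper's proof, and Stages~1 and~3 are handled essentially as you describe. But there is a concrete gap in your Stage~2 argument, which is where the paper's real work lies. The initial partition $G$ is a function of the \emph{same} noise $W$ that enters the $R^2$ statistics, so you cannot treat the between/within sums of squares conditionally on $\calE_1$ as if the group assignments were fixed. In particular, for a null coordinate $j\notin S^{\ast}$ the quantity you must control is $\sum_{a}\frac{1}{n_a}\bigl(\sum_{i\in G_a}W_{ij}\bigr)^2$ with $G$ data-dependent; a Hanson--Wright bound on this quadratic form for a fixed partition does not apply, and your claim that the null $R^2$ concentrates at $(k-1)/n$ is only true for a partition independent of the noise. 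The paper resolves this by proving a concentration bound that holds \emph{uniformly over all partitions} within group-wise error $\alpha/128$ of the truth (its Lemma 7.5): one counts the number of such partitions, roughly $\exp(\gamma n\log(ek/\gamma))$, and pays for the union bound with the sub-exponential tail of $W_{G_a}^2$, obtaining only the weaker conclusion that the adversarial null $R^2$ is bounded by a small constant (hence the fixed threshold $\tau=0.9$ rather than a $(k-1)/n$-scale threshold), at probability cost $\exp(-c\alpha n)$. This is also precisely where the condition $\alpha n=\Omega(\log p)$ comes from when you union-bound over the $p$ coordinates. Without some uniform-over-partitions device of this kind (or a sample-splitting trick, which the algorithm does not use), your perturbation expansion around the true labels does not close.

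Two smaller points. First, for Stage~1 the paper does not simply import an off-the-shelf Davis--Kahan/Wedin guarantee: it uses the unilateral singular-subspace perturbation bound of Cai--Zhang to get a condition scaling like $(p/n)^{1/4}$ instead of $\sqrt{p/n}$, which is one of its advertised contributions; your route is valid but would require a strictly stronger signal condition, so the ``mild conditions'' in the statement would degrade. Second, the feature-selection theorem needs control of a \emph{group-wise} error $B(G,T)\le c\alpha$ (worst group false-positive/false-negative rate), not just an overall mis-clustering fraction $\varepsilon_n=o(1)$; these coincide up to an $\alpha$ factor, but you should track that factor since $\alpha$ also appears in the partition-counting step above.
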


We refer the readers to Theorem \ref{thm:main} in Section \ref{sec:theo} for the exact conditions we need. Another contribution of our analysis is to derive a faster convergence rate of spectral clustering. Inspired by the recent perturbation results for singular sub-spaces \cite{cai2016rate}, we improve the error rate of spectral clustering from $O(\sqrt{p/n})$ to $O((p/n)^{1/4})$ when $p>n$. Our proposed method provides a new way to efficiently characterize sub-populations in a heterogeneous dataset, identify informative genes, and gain biological insights from high-dimensional datasets such as scRNA-seq data.

The rest of the paper is organized as follows. Section \ref{sec:meth} introduces (SC-FS) methodology. Theoretical results are provided in Section \ref{sec:theo}. Section \ref{sec:experiments} reports the results from numerical studies, including synthetic data study and four real data applications. Finally, we conclude the paper with some remarks and discussions in Section \ref{sec:dis}.

\section{Methodology}
\label{sec:meth}
In this section, we formally introduce the sparse Gaussian mixture model considered in the paper. Then we present the three stages of our SC-FS algorithm. 
\subsection{Sparse Gaussian Mixture Model}
Suppose there are $k$ clusters with center matrix $B\in\mathbb{R}^{k\times p}$, with rows $B_{1*}, \cdots, B_{k*} \in \mathbb{R}^{p}$ being centers of clusters. We observe independent samples from the following Gaussian mixture model.
\begin{equation} \label{eq:model}
Y_i = B_{z_i*} + W_i, \quad i=1,2,\cdots,n
\end{equation}
where $\{W_{i}\}$ are independent sub-Gaussian random vectors satisfying $$\E \exp \left( \gamma^T W_i \right) \le \exp(\|\gamma\|^2 \sigma^2/2)$$ for any $\gamma \in \mathbb{R}^d$ and $z_i \in [k]$ is the cluster label of the $i$th sample. Let $[p]$ denote the set $\{1,2,\cdots,p\}$.  For $j \in [p]$, let $\sigma_j^2$ be the marginal variance of the $j$-th feature. Here the variances for different features are not necessarily the same. For any subset of $A\subseteq[p]$, denote $\sigma_A=\max_{i \in A} \sigma_i$.  Let $T_a$ be the $a$-th cluster, i.e., $T_a = \{ i \in [n], z_i = a \}$ for $a \in [k]$. 

As we discussed in the introduction, there are many non-informative features under the ``large $p$, small $n$'' scenario. We refer to a feature as non-informative if its within-cluster means are the same across different clusters. Suppose there are $s$ informative features. Then the centers $B_1, \cdots, B_k$ only differ at $s$ coordinates.  Without loss of generality, we assume there is a subset $S \subset [p]$ with cardinality $s$ such that $B_{ij} = 0$ for all $i \in [k]$ and $j \in S^c$, where $B_{ij}$ is the $j$-th entry of center $B_i$. In practice, we can achieve this by centering and standard scaling each column.

\subsection{Algorithm}
In this section, we present our algorithm for clustering sparse Gaussian mixture data. The algorithm consists of three stages. In the first stage, we obtain an initial estimator of the labels by spectral clustering. Then we perform a feature selection step based on the initial label estimators. Finally, we run spectral clustering and Lloyd's algorithm on the selected features. 
\subsubsection{Stage 1: Spectral Clustering} \label{sec:spec}
In order to get a good initial estimator of the labels, we first perform de-noising via singular value decomposition (SVD), which preserves the cluster structure on the left eigenvectors under the noiseless case. More precisely, we can rewrite our model (\ref{eq:model}) as $Y = ZB + W$, where $$Z \in \mathcal{Z} =\left\{ A \in \{0,1\}^{n \times k},  \|A_{i*}\|_0 = 1, i \in [n] \right\}$$ is a membership matrix that has exactly one 1 in each row.  Then the SVD of the mean matrix $ZB$ has the following property. 
\begin{lemma} \label{lm:svdprop}
Let $U D V^T$ be the singular value decomposition of $ZB$, where $B$ is full rank. Then $U = ZQ$ with $Q \in \mathbb{R}^{k \times k}$ and $\|Q_{u*} - Q_{v*}\| = \sqrt{\frac{1}{n_u^*} + \frac{1}{n_v^*}}$ for all $1 \le u < v \le k$.  Moreover, $\sigma_{k}(ZB) \ge \sqrt{\alpha n} \sigma_{k}(B)$, where $\alpha n$ is the smallest cluster size.

\end{lemma}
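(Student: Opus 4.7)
The plan is to exploit the fact that, because $B$ has rank $k$, the column space of the mean matrix $ZB$ coincides with the column space of $Z$, and then to use orthonormality of the left singular vectors to pin down the row structure of $Q$.

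First I would show $\mathrm{Col}(ZB)=\mathrm{Col}(Z)$: every column of $ZB$ lies in $\mathrm{Col}(Z)$, while $\rank(Z)=k$ (its columns are indicators of the $k$ clusters, all non-empty) and $\rank(ZB)=\rank(B)=k$. Since the columns of $U$ form an orthonormal basis of $\mathrm{Col}(ZB)\subseteq\mathrm{Col}(Z)$, we can write $U=ZQ$ for a unique invertible $Q\in\R^{k\times k}$.

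Next, I would pin down $Q$ using orthonormality: $I_k=U^T U=Q^T(Z^T Z)Q$. Because $Z$ is a membership matrix, $Z^T Z=N:=\diag(n_1^*,\ldots,n_k^*)$, so $N^{1/2}Q$ is orthogonal. This immediately gives that the rows of $Q$ are pairwise orthogonal with $\|Q_{u*}\|^2=1/n_u^*$, and the Pythagorean identity then yields $\|Q_{u*}-Q_{v*}\|^2=1/n_u^*+1/n_v^*$, which is the claimed formula.

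For the singular-value bound, I would observe $\sigma_k(ZB)^2=\lambda_k(B^T N B)=\sigma_k(N^{1/2}B)^2$ and, for any unit $y\in\R^k$, estimate $\|B^T N^{1/2}y\|\ge\sigma_k(B)\,\|N^{1/2}y\|\ge\sigma_k(B)\sqrt{\alpha n}$; the first inequality uses that $B^T$ is injective (full column rank $k$), and the second uses $\sigma_{\min}(N^{1/2})=\sqrt{\alpha n}$. Infimum over $y$ gives the stated bound. I do not expect any real obstacle here — the whole argument is a careful unpacking of reduced-SVD identities and elementary singular-value inequalities; the only place meriting a moment's care is that the bound $\sigma_k(N^{1/2}B)\ge\sigma_{\min}(N^{1/2})\sigma_k(B)$ requires the $k\times k$ factor $N^{1/2}$ to be invertible and $B$ to have full row rank, both of which hold by assumption.
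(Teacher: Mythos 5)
Your proof is correct and follows essentially the same route as the paper, which simply defers to Lemma 2.1 of \citet{lei2013consistency}: that cited lemma is proved by exactly your orthonormality computation $I_k = U^TU = Q^T(Z^TZ)Q$ with $Z^TZ = \diag(n_1^*,\ldots,n_k^*)$, so that $N^{1/2}Q$ is orthogonal and the row separation follows from Pythagoras. Your argument for the singular-value bound $\sigma_k(ZB)\ge\sqrt{\alpha n}\,\sigma_k(B)$ is also sound and supplies the detail the paper leaves implicit.
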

This lemma is an immediate consequence of Lemma 2.1 in \cite{lei2013consistency} by noticing that the left singular vectors of $ZB$ are orthonormal eigenvectors of $ZBB^TZ^T$. Lemma \ref{lm:svdprop} implies that there are only $k$ different rows of $U$ and we can recover the cluster labels from it. Intuitively, when we have noisy observations of the $ZB$ matrix, $\widehat{U}$, the leading $k$ left singular vectors of sample matrix $Y$, should not differ from $U$ much. Since the rows of $U$ are well separated, we could run a distance-based clustering algorithm on the rows $\widehat{U}$ to estimate the labels. Theoretically, $k$-means problem is NP-hard and hence we use a polynomial-time approximation scheme of $k$-means. One possible choice is the $(1+\epsilon)$-approximate $k$-means algorithm proposed in \cite{kumar2004simple}. Another choice is the kmeans++ algorithm \cite{arthur2007k}. Although kmeans++ is only guaranteed to be a $(1+\log k)$-approximation in expectation,  it usually enjoys good performance in practice. 

\begin{algorithm}
\caption{Spectral Clustering} \vspace{0.05in}
\textbf{Input}: $Y_1, Y_2, \cdots, Y_n$. The number of clusters $k$. \vspace{0.05in} \\
\textbf{Output}: Estimated clusters $G_1, G_2, \dots, G_k$. 
\begin{itemize}
\item[1.] Compute $\widehat{U} \in \mathbb{R}^{n \times k}$ consisting of the leading $k$ left singular vectors (ordered in singular values) of $Y=[Y_1,\cdots, Y_n]^T$. 
\item[2.] Run $(1+\epsilon)$-approximation k-means on the rows of $\widehat{U}$, i.e. find $\widehat{Q} \in \mathbb{R}^{k \times k}$ and $\widehat{Z} \in \mathcal{Z}$ such that
\begin{equation} \label{eq:appkmeans}
\| \widehat{Z} \widehat{Q} - \widehat{U} \|_F^2 \le (1+\epsilon) \min_{\overline{Z} \in \mathcal{Z}, \overline{Q} \in \mathbb{R}^{k \times k}} \|\overline{Z} ~\overline{Q} - \widehat{U}\|_F^2
\end{equation}
\end{itemize}
\label{alg:spectralclustering}
\end{algorithm}

The above ideas are summarized in Algorithm \ref{alg:spectralclustering}.  We would like to remark that this spectral clustering algorithm is different from the popular one used in Gaussian mixture literature \cite{kumar2010clustering, awasthi2012improved, kannan2009spectral}, which runs clustering algorithm on the best rank $k$ projections of the data matrix $Y$. As we shall see in Section \ref{sec:specthm}, while these two algorithms theoretically work equally well for the low dimensional Gaussian mixture models, Algorithm \ref{alg:spectralclustering} is better for the high-dimensional sparse Gaussian mixtures. Moreover,  Algorithm \ref{alg:spectralclustering} is computationally more efficient since it runs clustering algorithms on an $n \times k$ matrix $\widehat{U}$, in contrast to the $n \times p$ matrix using the best rank-$k$ projections. 

\subsubsection{Stage 2: Feature Selection Using R-squared}
To select informative features, a first thought would be to compare the sum of squares $\sum_{i=1}^{n} Y_{ij}^2$ of different columns. The larger the sum of squares is, the more likely it is an informative feature. Indeed, when there is no signal, i.e. $j \in S^c$, the sum of squares is a sum of independent Chi-square random features with expectation $n\sigma_j^2$. And when there is a signal, the expectation of the sum of squares is $\sum_{a=1}^{k} n_a^* B_{aj}^2 +  n\sigma_j^2$. If $\sigma_j$'s are the same for all $j$, one would expect this method to correctly select informative features. However, $\sigma_j$ may vary in practice and we could have some $j_1$ and $j_2$ such that $\sum_{a=1}^{k} n_a^* B_{aj_1}^2 +  n\sigma_{j_1}^2 \ll n\sigma_{j_2}^2$. To avoid this problem, we need to normalize by the variance of each column. 
\begin{algorithm}
\caption{Feature Selection Using $R^2$} \vspace{0.05in}
\textbf{Input}: $Y_1, Y_2, \cdots, Y_n$. The number of clusters $k$. Initial estimates of clusters $G_1, G_2, \cdots, G_k$\vspace{0.05in}. A threshold $\tau \in (0,1)$. \\
\textbf{Output}: An index set $\widehat{S}$. 
\begin{itemize}
\item[1.] For $j=1,2,\cdots,p$, calculate:
\begin{itemize} 
\item[1a.] Estimated centers: $\hat{B}_{aj} = \frac{1}{|G_a|} \sum_{i \in G_a} Y_{ij}$
\item[1b.] Residual sum of squares:  $ c_j =  \sum_{a=1}^{k}\sum_{i \in G_a} (Y_{ij} - \hat{B}_{aj})^2 $ 
\item[1c.] Total sum of squares: $ m_j = \sum_{i \in [n]} (Y_{ij} - \bar{Y}_j)^2$
\item[1d.] Score: $SC_j = \frac{c_j}{m_j}$.
\end{itemize}
\item[2.] Output $\widehat{S} = \{j \in [n], SC_j \le \tau\}.$ 
\end{itemize}
\label{alg:select}
\end{algorithm}

To motivate our feature selection procedure, we consider a special case of symmetric, two balanced clusters with means $\theta$ and $-\theta \in \mathbb{R}^p$. Let $T_i \in \{1,2\}$ be the true label of $i$th sample, whose mean is $(2T_i-3)\theta$. For a non-informative feature $j \in S^c$, $\theta_j=0$. Thus ${\Var}(Y_{ij}|T_i) = {\Var}(Y_{ij})$. For informative feature $j\in S$, $\theta_j\neq 0$. For an informative feature, on the other hand, we have ${\Var}(Y_{ij}|T_i) < {\Var}(Y_{ij})$ for $j \in S$. Let $\widetilde{T}_i\in\{1,2\}$ be the cluster label for the $i$th sample obtained from Stage 1, it is natural to consider the quantity
\[ R^2_j= 1-\frac{\E[\Var(Y_{ij}|\widetilde{T}_i)]}{\Var(Y_{ij})}. \]

\begin{proposition}\label{prop:r2}
For $i$th example, let $a_{kl}  = \P(T_i=k,\widetilde{T}_i=l)$ for $k,l \in\{1,2\}$.
\begin{equation}\label{eq:r2_express}
R_j^2 = \frac{\theta_j^2}{\theta_j^2+\sigma_j^2} \left(\frac{(a_{11}-a_{21})^2}{(a_{11}+a_{21})}+ \frac{(a_{22}-a_{12})^2}{(a_{22}+a_{12})} \right).
\end{equation}
\end{proposition}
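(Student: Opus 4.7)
The plan is to apply the law of total variance to rewrite
\[
R_j^2 = 1-\frac{\mathbb{E}[\mathrm{Var}(Y_{ij}\mid \widetilde{T}_i)]}{\mathrm{Var}(Y_{ij})} = \frac{\mathrm{Var}(\mathbb{E}[Y_{ij}\mid \widetilde{T}_i])}{\mathrm{Var}(Y_{ij})},
\]
so that I only need to compute the unconditional variance of $Y_{ij}$ and the variance of its conditional mean given $\widetilde{T}_i$. Throughout I treat $(T_i,\widetilde{T}_i)$ as jointly distributed according to the $a_{kl}$'s, with the noise $W_{ij}$ independent of both labels (as is implicit in the setup, since $\widetilde{T}_i$ is obtained from Stage 1 and can be modeled as independent of the within-cell noise on feature $j$).

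First I would handle the denominator. Under the balanced assumption $\mathbb{P}(T_i=1)=\mathbb{P}(T_i=2)=1/2$, the mean $\mathbb{E}[Y_{ij}]=\tfrac12(-\theta_j)+\tfrac12\theta_j=0$, and a direct second-moment computation gives $\mathrm{Var}(Y_{ij})=\mathbb{E}[((2T_i-3)\theta_j)^2]+\sigma_j^2=\theta_j^2+\sigma_j^2$.

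Next I would compute the numerator. Using the joint probabilities $a_{kl}=\mathbb{P}(T_i=k,\widetilde{T}_i=l)$, the conditional distribution is $\mathbb{P}(T_i=k\mid \widetilde{T}_i=l)=a_{kl}/(a_{1l}+a_{2l})$. Since $W_{ij}$ is independent of $\widetilde{T}_i$,
\[
\mathbb{E}[Y_{ij}\mid \widetilde{T}_i=l]=\theta_j\cdot\frac{a_{2l}-a_{1l}}{a_{1l}+a_{2l}}.
\]
Because $\mathbb{E}[Y_{ij}]=0$, the variance of this conditional mean is just the second moment:
\[
\mathrm{Var}(\mathbb{E}[Y_{ij}\mid \widetilde{T}_i])=\sum_{l=1}^{2}(a_{1l}+a_{2l})\,\theta_j^2\,\frac{(a_{2l}-a_{1l})^2}{(a_{1l}+a_{2l})^2}=\theta_j^2\left[\frac{(a_{11}-a_{21})^2}{a_{11}+a_{21}}+\frac{(a_{22}-a_{12})^2}{a_{22}+a_{12}}\right].
\]

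Dividing by $\theta_j^2+\sigma_j^2$ produces exactly the claimed identity. The proof is essentially a bookkeeping exercise; the only conceptual point to be careful about is the independence of $W_{ij}$ from $\widetilde{T}_i$, which justifies that the conditional mean collapses to the posterior-weighted average of $(2k-3)\theta_j$ and that no extra cross terms appear. Since this proposition is a population-level identity (not a concentration statement), no further probabilistic machinery is required.
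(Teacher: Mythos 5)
Your proof is correct and is essentially the paper's argument viewed from the complementary side of the same variance decomposition: the paper computes $\E[\Var(Y_{ij}\mid\widetilde{T}_i)]$ explicitly (conditioning further on $T_i$) and subtracts, while you use the law of total variance once to reduce $R_j^2$ to $\Var(\E[Y_{ij}\mid\widetilde{T}_i])/\Var(Y_{ij})$ and evaluate the posterior-weighted conditional means, arriving at the identical bookkeeping with the $a_{kl}$'s. Both arguments rest on the same (implicit in the paper, explicit in your write-up) assumption that $W_{ij}$ is independent of $(T_i,\widetilde{T}_i)$ and that the clusters are balanced, so no gap remains.
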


In the case of pure initial random guess $a_{11}=a_{21}$ and $a_{22}=a_{12}$, $R_j^2=0$. If the initial estimator $\widetilde{T}$ is slightly better than random guess, we have $R^2_j >0$ for informative feature. We can distinguish between $j \in S$ and $j \in S^c$ via $R_j^2$. Besides, when $j \in S$, $R^2_j$ depends on signal-to-noise ratio $\theta_j^2/\sigma_j^2$. The higher the signal-to-noise ratio, the weaker condition we need on the initial estimator to get the same $R^2_j$. We defer to Section \ref{sec:selectthm} for our detailed analysis on the sample version and the general number of clusters.

\subsubsection{Stage 3: Spectral Clustering and Lloyd's Algorithm}
With the features selected in Stage 2, the problem is reduced to low-dimensional Gaussian mixtures, which has been studied extensively in the literature. Among them, the most popular algorithms for Gaussian mixtures are the Lloyd's algorithm \cite{lloyd1982least}, EM algorithm\cite{dempster1977maximum}, methods of moments \cite{lindsay1993multivariate}, and tensor decompositions \cite{anandkumar2012method}. For stage 3, we use the spectral clustering Algorithm \ref{alg:spectralclustering} on selected features, followed by the Lloyd's iterations. The Lloyd's algorithm, often be referred as $k$-means algorithm, enjoys good statistical and computational guarantees for Gaussian mixture models \cite{lu2016statistical}. Given an initial estimator of the labels or centers, it iteratively updates the labels and centers on the selected features until convergence. A precise description is given in Algorithm \ref{alg:lloyd}. We refer the readers to \cite{lu2016statistical} for more discussions of the Lloyd's algorithm.

\begin{algorithm}
\caption{SpecLloyd algorithm} \vspace{0.05in}
\textbf{Input}: $Y_1, Y_2, \cdots, Y_n$. The number of clusters $k$. An index set of selected features $\widehat{S}$ \vspace{0.05in}. \\
\textbf{Output}:  Estimated cluster labels $z_1^{(T)}, z_2^{(T)}, \cdots, z_n^{(T)}$.  \vspace{-0.05in}\\
\begin{itemize}
\item[1. ] Run Algorithm \ref{alg:spectralclustering} on $\{\tilde{Y}_i\}$ to get an initial estimate of labels, $z_1^{(0)}, z_2^{(0)}, \cdots, z_n^{(0)}$, where $\tilde{Y}_i$ is the sub-vector of $Y_i$ with support $\widehat{S}$.
\item[2. ] Run the following iterations for $t=1,2,\cdots, T$.  \vspace{-0.05in}
\begin{itemize}
\item[2a.] For $(a, j) \in [k] \times \widehat{S}$,
\[ \widehat{B}_{aj}^{(t)} = \frac{\sum_{i \in [n]} Y_{ij} \mathbf{1}\{z_i^{(t-1)}=a\} }{\sum_{i \in [n]}  \mathbf{1}\{z_i^{(t-1)}=a\}}.  \]
\item[2b.] For $i \in [n]$,
\[ z_i^{(t)} = \argmin_{a \in [k]} \sum_{j \in \widehat{S}} (Y_{ij} - \widehat{B}_{aj}^{(t)})^2. \] 
\end{itemize}
\end{itemize}
\label{alg:lloyd}
\end{algorithm}

In summary, we first conduct spectral clustering to estimate noisy cluster labels, then we apply $R^2$ to select top informative features, finally we apply spectral clustering again on selected features. To further reduce the error, we apply the Lloyd's algorithm after the last stage.

\section{Convergence Analysis}
\label{sec:theo}
To better present our theoretical results, let us first introduce some notations and assumptions. For any partition $G$, we define a group-wise mislabeling rate. Recall that $T$ is the true partition. Let
$$B(G,T) = \min_{\pi \in \mathbb{S}_k} \max_{a \in [k]} \left\{ \frac{|G_{\pi(a)} \cap T_a^c|}{|G_{\pi(a)}|}, \frac{|T_a \cap G_{\pi(a)}^c|}{|T_a|} \right\},$$
where $\mathbb{S}_k$ is the set of permutations from $[k]$ to $[k]$. The two terms can be interpreted as the false positive rate and true negative rate of each group, respectively. 

Let $\alpha n = \min_{a \in [k]} n_a^*$ be the smallest cluster size, where $n_a^* = |T_a|$.  Since there are $k$ clusters, we have $\alpha$ strictly greater than $0$. $\alpha$ will play a role in our analysis because it determines how well we can estimate the centers even under the oracle case that the true labels are available. And it will further affect the quality of feature selections. 

Another crucial quantity in our analysis is the signal-to-noise ratio. We define  
\[ \textrm{SNR} = \min_{j \in S} \frac{1}{n\sigma_j^2}\sum_{a=1}^{k} n_a^* (B_{aj} - \bar{B}_{*j}) ^2\]
as the average signal-to-noise ratio of informative features, where $\bar{B}_{*j} = \frac{1}{n} \sum_{i=1}^n B_{ij}$. Intuitively, the larger $\textrm{SNR}$ is, the easier the clustering task is. In order to do non-trivial clustering, a necessary condition is that the signal strength is bigger than the noise level. Thus, we need a lower bound on \textrm{SNR}.

In the following, we split the convergence analysis into three parts, corresponding to the three stages of our algorithm. 

\subsection{Error Rate of Spectral Clustering} \label{sec:specthm}
The following theorem provides an upper bound on group-wise mis-clustering error of spectral clustering algorithm \ref{alg:spectralclustering} for Gaussian mixture model. 
\begin{thm} \label{thm:spec}
Let $G$ be the partition returned by Algorithm \ref{alg:spectralclustering} and $B_S$ be the sub-matrix of $B$ consist of $s$ non-zero columns.  Assume the $k$th singular value
\begin{equation} \label{eq:eigencond}
\sigma_k(B_S) \ge C \max \left\{ \sigma \sqrt{\frac{k}{\alpha}}, \left( \frac{\sigma^2 k p }{\alpha^2 n} \right)^{1/4} \right\}
\end{equation}
for a sufficiently large constant $C$. Then the group-wise mis-clustering error rate 
\[ B(G,T) \le \frac{C_1\sigma^2 k (\alpha n \sigma_k^2(B_S) + p)}{\alpha^3 n \sigma_k^4(B_S)} \]
with probability greater than $1-\exp(-C_2n)$ for some universal constants $C_1$ and $C_2$.  
\end{thm}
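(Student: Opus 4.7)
The plan is to control the mis-clustering rate of Algorithm \ref{alg:spectralclustering} in three steps: (i) bound the distance between the empirical leading-$k$ left singular subspace $\widehat U$ and the population subspace $U$ of $ZB$ via a refined singular-subspace perturbation bound; (ii) use the row separation of $U$ established in Lemma \ref{lm:svdprop} to turn the subspace distance into a count of rows of $\widehat U$ that lie far from the correct population centers; (iii) feed this into the guarantee of the $(1+\epsilon)$-approximate $k$-means step.

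For step (i), write $Y = ZB + W$ and let $UDV^\top$ be the SVD of $ZB$. The naive Wedin bound $\|\sin\Theta(\widehat U, U)\|_F \lesssim \|W\|_{op}/\sigma_k(ZB)$ uses only the operator norm of the noise, which is of order $\sigma(\sqrt{n}+\sqrt{p})$, and would only yield the $(p/n)^{1/2}$-type rate the paper seeks to improve on. Instead I would invoke the refined perturbation bound of \cite{cai2016rate}, which controls $\|\widehat U O - U\|_F^2$ (for some orthogonal $O$) by a first-order piece proportional to $\|WV\|_F^2/\sigma_k^2(ZB)$ plus a second-order correction in which the much larger $\|U^\top W\|_F$-type term is weighted by an extra factor $\sigma_k^{-2}(ZB)$. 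The relevant noise norms can then be bounded by sub-Gaussian concentration: on an event of probability at least $1-e^{-C_2 n}$,
\[
\|WV\|_F^2 \lesssim \sigma^2 nk,\qquad \|U^\top W\|_F^2 \lesssim \sigma^2 kp,\qquad \|W\|_{op}^2 \lesssim \sigma^2(n+p),
\]
where for $\|WV\|_F$ and $\|U^\top W\|_F$ one exploits that $V$ and $U$ are deterministic $p\times k$ and $n\times k$ orthonormal matrices so that the projected noise has only $nk$ or $kp$ coordinates. Plugging in the lower bound $\sigma_k(ZB)\ge \sqrt{\alpha n}\sigma_k(B_S)$ from Lemma \ref{lm:svdprop} and using condition \eqref{eq:eigencond} to guarantee $\|W\|_{op}\le \sigma_k(ZB)/2$, the perturbation bound collapses to
\[
\|\widehat U O - U\|_F^2 \;\lesssim\; \frac{\sigma^2 k(\alpha n \sigma_k^2(B_S) + p)}{\alpha^2 n^2 \sigma_k^4(B_S)}.
\]

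For steps (ii)--(iii), Lemma \ref{lm:svdprop} shows that the rows of $U$ corresponding to two distinct clusters $u,v$ differ by exactly $\sqrt{1/n_u^* + 1/n_v^*}\ge\sqrt{2/n}$ in Euclidean norm. A now-standard rounding argument (see the proof of Theorem 3.1 in \cite{lei2013consistency}) shows that after the $(1+\epsilon)$-approximate $k$-means on the rows of $\widehat U$, the total number of rows whose estimated cluster differs from the true one is at most $C(1+\epsilon)\|\widehat U O - U\|_F^2/\delta_{\min}^2$ with $\delta_{\min}^2 \asymp 1/n$. Dividing by the smallest cluster size $\alpha n$ converts this absolute count into the per-group rate $B(G,T)$ and introduces the remaining factor $1/\alpha$. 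Combining, $B(G,T) \lesssim \|\widehat U O - U\|_F^2/\alpha$, which matches the theorem statement.

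\textbf{Main obstacle.} The non-routine part is step (i): extracting the $(p/n)^{1/4}$ rate from the perturbation analysis. The standard Wedin bound is insufficient because it overpays by treating $\|W\|_{op}$, of order $\sigma\sqrt{p}$ in the high-dimensional regime, as the only noise quantity. One must instead split the noise into the tangential block $WV$ (which concentrates at $\sigma\sqrt{nk}$) and the transverse block $U^\top W V_\perp$ (which concentrates at $\sigma\sqrt{kp}$), and verify via the Cai--Zhang machinery that the transverse block enters only at second order so that the eigengap condition \eqref{eq:eigencond} is exactly calibrated to balance the two terms. Once that perturbation bound is secured, the remaining concentration and $k$-means rounding arguments are standard and will combine to give the claimed high-probability rate.
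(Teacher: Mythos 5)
Your proposal follows essentially the same route as the paper's proof: the refined Cai--Zhang singular-subspace perturbation bound (Lemma \ref{lm:sintheta}), combined with the Lei--Rinaldo approximate $k$-means rounding lemma (Lemma \ref{lm:appkmeans}) and the row-separation and $\sigma_k(ZB)\ge\sqrt{\alpha n}\,\sigma_k(B_S)$ facts from Lemma \ref{lm:svdprop}. One bookkeeping slip to fix: the perturbation bound actually yields $\|\widehat{U}O-U\|_F^2\lesssim \sigma^2 k(\alpha n\sigma_k^2(B_S)+p)/(\alpha^2 n\,\sigma_k^4(B_S))$ --- a single factor of $n$, not $n^2$, in the denominator --- and if one literally plugs your displayed bound into your own step (iii) the result overshoots the theorem by a factor of $1/n$ rather than ``matching'' it; with the corrected bound, $B(G,T)\lesssim\|\widehat{U}O-U\|_F^2/\alpha$ gives exactly the claimed rate. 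Also note that condition \eqref{eq:eigencond} is used not only to control the eigengap in the perturbation step but also to verify the separation hypothesis of the $k$-means lemma (so that the permutation aligning $\widehat{Z}$ with $Z$ exists and the two one-sided error rates in $B(G,T)$ can both be controlled via $|G_a|\ge n_a^*(1-R)$), a step worth making explicit.
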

It guarantees a relatively small mis-clustering error, for example, $10\%$, under condition (\ref{eq:eigencond}). It only has a $(p/n)^{1/4}$ dependence on the dimensionality of the problem in condition (\ref{eq:eigencond}). Thus it is applicable to the high dimensional problem and can be satisfied under many interesting cases. For example, when $B_S$ is a random matrix, its minimum eigenvalue can be lower bounded by $c\sqrt{s}$ for some constant $c$ with high probability \cite{vershynin2010introduction}, where $s$ is the number of informative features. Then condition (\ref{eq:eigencond}) is reduced to $s \gtrsim \max\{\sigma^2, \sigma (p/n)^{1/4} \}$ by regarding $k$ and $\alpha$ as constants. 

As discussed in Section \ref{sec:spec}, another version of the spectral clustering algorithm is to run a distance-based clustering algorithm on the rows of $\widehat{Y}$,  the rank-$k$ approximation of the data matrix $Y$, instead of on the estimated eigenspace $\widehat{U}$. The condition \cite{awasthi2012improved, lu2016statistical} we need for this spectral clustering algorithm is
\[ \min_{u \neq v \in [k]^2} \|B_{u*} - B_{v*}\| \ge C_4 \sigma \sqrt{\frac{k}{\alpha}\left(1+\frac{kp}{n}\right)} \] 
for some sufficiently large constant $C_4$, since there are only $s$ non-zero entries of each row of $B$. It requires $s \gtrsim \sigma \sqrt{p/n}$ when $k$ and $\alpha$ are constants. Thus, Algorithm \ref{alg:spectralclustering} works better for the high-dimensional setting. 

\subsection{Feature Selection Guarantees} \label{sec:selectthm}
The next theorem provides theoretical guarantees of the feature selection step. 
\begin{thm} \label{lm:key1}
Assume $\textrm{SNR} > C_0$ for some sufficiently large constant $C_0$. Then there exist a constant $c$ such that for any given estimated partition $G$ (could be data dependent) with $B(G, T) \le c \alpha$. 
\begin{itemize}
\item[(a).] When $j \in S$, we have $SC_j \le 0.9$ with probability greater than $1-\exp(-cn)$. 
\item[(b).] When $j \in S^c$, we have $SC_j > 0.9$ with probability greater than $1-\exp(-c\alpha n)$
\end{itemize}
Therefore, when $\alpha n = \Omega(\log p)$, a choice of $\tau=0.9$ successfully selects all the informative features with probability greater than $1-\exp(-c\alpha n)$.
\end{thm}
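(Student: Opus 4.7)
The plan is to analyze $SC_j = c_j/m_j$ by rewriting $m_j - c_j = \sum_{a=1}^k |G_a|(\hat B_{aj} - \bar Y_j)^2$ as the between-group sum of squares for partition $G$, and separating signal from noise in the two regimes. Throughout, after relabeling by the optimal permutation, the hypothesis $B(G,T)\le c\alpha$ gives $|G_a \cap T_a^c|\le c\alpha|G_a|$ and $|T_a \cap G_a^c|\le c\alpha|T_a|$, so cluster sizes satisfy $|G_a|/n_a^* = 1+O(c)$. Substituting $Y_{ij}=B_{z_i j}+W_{ij}$ into $\hat B_{aj}$ yields
\[\hat B_{aj} = \sum_{b=1}^k \frac{|G_a\cap T_b|}{|G_a|}B_{bj} + \bar W_{G_a,j},\]
and the mislabeling bias $\hat B_{aj}-B_{aj}$ decomposes into a deterministic part of magnitude at most $c\alpha\,\max_b|B_{bj}-B_{aj}|$ and a sub-Gaussian mean $\bar W_{G_a,j}$.

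Part (b). For $j\in S^c$ all $B_{aj}=0$, so $Y_{\cdot j}=W_{\cdot j}$ is i.i.d.\ sub-Gaussian and $m_j-c_j = \sum_a|G_a|(\bar W_{G_a,j}-\bar W_{\cdot j})^2$. For any \emph{fixed} partition this quadratic form concentrates around $(k-1)\sigma_j^2$ by standard sub-Gaussian tail bounds. To handle the data dependence of $G$, I would take a uniform bound over the set of partitions satisfying $B(G,T)\le c\alpha$: up to the $k!$ permutations, any such partition is obtained by choosing at most $c\alpha n$ samples to ``move'' from their true cluster and relabeling them among $k$ clusters, giving at most $k!\binom{n}{\lfloor c\alpha n\rfloor}k^{\lfloor c\alpha n\rfloor}\le \exp(Cc\alpha n\log(k/c))$ admissible partitions. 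For $c$ small enough relative to the sub-Gaussian tail exponent, a union bound yields $m_j-c_j\le C_1(k + c\alpha n)\sigma_j^2$ uniformly over admissible $G$ with probability $1-\exp(-c'\alpha n)$. Combined with $m_j\ge (1-o(1))n\sigma_j^2$, this gives $SC_j \ge 1 - C_1(k/n + c\alpha) > 0.9$, since $\alpha n = \Omega(\log p)$ forces $k \le 1/\alpha \ll n$ and $c$ is taken small.

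Part (a). For $j\in S$ the goal is to lower bound $m_j-c_j$ by a constant fraction of the full signal $\sum_a n_a^*(B_{aj}-\bar B_{*j})^2 = n\sigma_j^2\,\mathrm{SNR}_j$. Plugging the decomposition of $\hat B_{aj}$ and $\bar Y_j$ into $m_j-c_j$ and expanding, the leading term is precisely $n\sigma_j^2\,\mathrm{SNR}_j$, while the correction terms split into (i) a deterministic bias of order $c\alpha\cdot n\sigma_j^2\,\mathrm{SNR}_j$ from mis-clustered samples, absorbed into the signal by taking $c$ small; (ii) signal$\times$noise cross terms of order $\sigma_j\sqrt{kn\sigma_j^2\,\mathrm{SNR}_j}$, controlled by Cauchy--Schwarz plus sub-Gaussian concentration; and (iii) pure noise of order $k\sigma_j^2$, exactly as in part (b). Under $\mathrm{SNR}\ge C_0$ with $C_0$ large the signal dominates, so $m_j-c_j\ge \tfrac{1}{2}n\sigma_j^2\,\mathrm{SNR}_j$. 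An analogous expansion gives $m_j\le (1+o(1))n\sigma_j^2(1+\mathrm{SNR}_j)$. Therefore
\[SC_j \le 1 - \frac{\tfrac{1}{2}\mathrm{SNR}_j}{(1+o(1))(1+\mathrm{SNR}_j)} \le 0.9\]
for $C_0$ sufficiently large, with the same uniform-over-$G$ enumeration used in (b).

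A final union bound across $j\in[p]$ multiplies the per-feature failure probability $\exp(-c\alpha n)$ by $p$, which remains $o(1)$ under $\alpha n=\Omega(\log p)$, so all informative features pass the threshold and all non-informative ones are rejected. The main obstacle is the data dependence of $G$: sub-Gaussian concentration for one fixed partition has to be promoted to a uniform statement, which is exactly why $c$ in $B(G,T)\le c\alpha$ must be small --- small enough that the combinatorial entropy $\exp(Cc\alpha n\log(k/c))$ of the admissible class of partitions is dominated by the concentration exponent $c'\alpha n$. A second subtlety, specific to part (a), is that the mislabeling bias in $\hat B_{aj}$ scales with the signal itself rather than with $\sigma_j$, so it has to be absorbed into the signal lower bound rather than compared against the noise.
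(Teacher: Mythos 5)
Your proposal is correct and follows essentially the same strategy as the paper: the two key ingredients in both arguments are (i) a union bound over the combinatorial class of partitions with $B(G,T)\le c\alpha$, whose log-cardinality $O(c\alpha n\log(k/(c\alpha)))$ must be beaten by the sub-exponential concentration exponent (this is exactly the paper's Lemma on $\sum_a W_{G_a}^2/n_a$), and (ii) absorbing the mislabeling bias --- which scales with the signal, not the noise --- by taking $c$ small (the paper's bound $\sum_a n_{ua}n_{va}/n_a \le n_u^* n_v^*/(8n)$ plays the role of your item (i) in part (a)). The one substantive difference is the decomposition: you work with the between-group sum of squares $m_j-c_j=\sum_a|G_a|(\hat B_{aj}-\bar Y_j)^2$, whereas the paper bounds $c_j$ and $m_j$ separately via the elementary inequalities $(u+v)^2\le 2u^2+2v^2$ and $(u+v)^2\ge u^2/2-2v^2$. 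The paper's route buys something in part (a): it bounds the within-group noise by $\sum_{i\in G_a}(w_i-\bar w_a)^2\le\sum_i w_i^2$, which is independent of $G$, so no uniformity over partitions is needed there and the stated probability $1-\exp(-cn)$ is obtained; your between-group expansion carries $G$-dependent noise and cross terms into part (a), so your version of (a) only delivers $1-\exp(-c\alpha n)$. This is immaterial for the final union-bound conclusion (which is anyway limited by part (b)), but it does mean your argument proves a slightly weaker statement for (a) than the theorem claims. Your part (b) is the paper's argument verbatim in different notation, and your accounting of the entropy of the partition class matches the paper's.
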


Given any initializer with $B(G, T) \le c\alpha$, we are guaranteed to select all the informative features with high probability when $\alpha n = \Omega({\log p})$. It implies that the number of features $p$ is allowed to grow exponentially fast of the sample size $n$. Such scaling also appears in the feature selection problem under sparse linear regression model \cite{wainwright2009sharp}. Since feature selection only depends on the error rate of initial guess, we can also choose other clustering approaches in Stage 1 as long as the error rate is satisfactory.

\subsection{Error rate of the Lloyd's algorithm}
Finally, we have the following result from \cite{lu2016statistical} to characterize the performance of the Lloyd's algorithm. \begin{thm} \label{thm:lloyd}
Let $\Delta=\min_{u \neq v \in [k]^2} \|B_{u*}- B_{v*}\|$.  Assume $n \alpha^2 \ge Ck \log n$, $n \ge ks$ and $\Delta \ge C\sigma_S \sqrt{k/\alpha}$ for a sufficiently large constant $C$. Given any initializer $G_0$ satisfying 
\begin{equation} \label{eq:condlloyd}
B(G_0, T) < \frac{\min_{u \neq v \in [k]^2} \|B_{u*}- B_{v*}\|}{4\max_{u \neq v \in [k]^2} \|B_{u*}- B_{v*}\|}:= \frac{1}{4\lambda}
\end{equation} 
with probability $1-\nu$. Then
\begin{equation} 
\frac{1}{n} \sum_{i=1}^{n} \mathbf{1}\{\hat{z}_i^{(s)} \neq z_i \} \le \exp \left( - \frac{\Delta^2}{16\sigma_S^2}\right),
\end{equation}
for all $s \ge 4 \log n$ with probability greater than $1-\nu-4/n-2\exp(-\Delta/\sigma_S)$.
\end{thm}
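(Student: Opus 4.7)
The plan is to follow the strategy developed in \cite{lu2016statistical}, since this theorem is essentially a restatement of their main result applied to the coordinate subset $\widehat{S}$. The core idea is a one-step contraction argument: conditional on a current labeling $z^{(t-1)}$ that is reasonably aligned with the truth, I would show (a) the plug-in center estimates $\widehat{B}^{(t)}$ are close to the true $B$ restricted to $\widehat{S}$, and (b) the relabeling rule in step 2b then produces a new labeling with strictly smaller error, up to an irreducible term of order $\exp(-\Delta^2/(16\sigma_S^2))$. Iterating $O(\log n)$ times drives the error down to this floor.

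For the contraction step I would first reduce to the event (of probability $1-\nu$) that $B(G_0,T)<1/(4\lambda)$, and then WLOG assume the optimal permutation is the identity. Writing $h_t = n^{-1}\sum_i \mathbf{1}\{z_i^{(t)}\neq z_i\}$, the estimated center satisfies
\[
\widehat{B}^{(t)}_{a,\widehat{S}} - B_{a,\widehat{S}} = \frac{1}{|\{i: z_i^{(t-1)}=a\}|}\sum_{i: z_i^{(t-1)}=a}\bigl(W_{i,\widehat{S}} + (B_{z_i,\widehat{S}} - B_{a,\widehat{S}})\bigr),
\]
so the perturbation splits into (i) an averaged noise term of order $\sigma_S\sqrt{|\widehat{S}|/(\alpha n)}$ and (ii) a contamination term bounded by $h_{t-1}\max_{u\neq v}\|B_{u*}-B_{v*}\|/\alpha$. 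The condition $B(G_0,T)<1/(4\lambda)$ is precisely calibrated so that the contamination never swamps the separation $\Delta$, and the sample-size condition $n\ge ks$ keeps the noise term small relative to $\Delta$. Then for each sample $i$, the misclassification event $\{z_i^{(t)}\neq z_i\}$ is a union over $a\neq z_i$ of linear events
\[
\langle W_{i,\widehat{S}}, B_{a,\widehat{S}}-B_{z_i,\widehat{S}}\rangle \ge \tfrac{1}{2}\|B_{a,\widehat{S}}-B_{z_i,\widehat{S}}\|^2 - (\text{perturbation cross-terms}),
\]
and sub-Gaussian concentration bounds each term by $\exp(-\Delta^2/(16\sigma_S^2))$ provided the perturbation is controlled by a small constant fraction of $\Delta$. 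Taking a union bound and using $n\alpha^2\ge Ck\log n$ to control the $k$ clusters gives
\[
h_t \le \tfrac{1}{2}h_{t-1} + \exp\!\bigl(-\Delta^2/(16\sigma_S^2)\bigr) + O(1/n),
\]
which after $4\log n$ iterations drives $h_t$ to the stated bound.

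The main technical obstacle, and the reason the argument in \cite{lu2016statistical} is delicate, is that at each iteration the labels $z^{(t-1)}$ are \emph{data-dependent}, so the noise $W_i$ is not independent of the partition used to form $\widehat{B}^{(t)}$. Naively conditioning on $z^{(t-1)}$ breaks the sub-Gaussian tail calculation because one cannot treat $W_i$ as conditionally centered sub-Gaussian for a specific sample simultaneously with summing over the cluster. The standard workaround, which I would adopt, is to condition on a fixed-but-arbitrary partition in a neighborhood $\{G : B(G,T)\le 1/(4\lambda)\}$ and take a union bound; the combinatorial price is absorbed by the $n\alpha^2 \ge Ck\log n$ assumption, and the terms $4/n + 2\exp(-\Delta/\sigma_S)$ in the final probability come from this union step together with a Chernoff bound controlling $\max_i \|W_{i,\widehat{S}}\|$. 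Finally, $n\ge ks$ is invoked to ensure the averaged noise error across $\widehat{S}$ does not exceed $\Delta$, and $\Delta\ge C\sigma_S\sqrt{k/\alpha}$ guarantees the initial and subsequent center estimates are well-separated so that nearest-center assignment is meaningful. I would then verify that Stages 1 and 2 (via Theorems \ref{thm:spec} and \ref{lm:key1}) supply an initializer $G_0$ satisfying condition \eqref{eq:condlloyd}, so the hypothesis $B(G_0,T)<1/(4\lambda)$ holds with the claimed probability.
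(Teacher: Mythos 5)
The paper does not prove this theorem at all: it is imported verbatim from \cite{lu2016statistical} (the text preceding the statement says ``we have the following result from \cite{lu2016statistical}''), with the only change being that everything is restricted to the selected coordinates $\widehat{S}$ and the noise level is $\sigma_S$. So there is no in-paper argument to compare against; what matters is whether your sketch would actually reproduce the Lu--Zhou proof. Your overall skeleton --- decompose the center error into an averaged-noise term plus a contamination term of order $h_{t-1}\lambda\Delta/\alpha$, reduce each misclassification to a sub-Gaussian tail event, and iterate a contraction $h_t \le \tfrac12 h_{t-1} + \exp(-\Delta^2/(16\sigma_S^2)) + O(1/n)$ for $O(\log n)$ steps --- is indeed the structure of their argument, and you correctly identify the data-dependence of $z^{(t-1)}$ as the central difficulty.

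However, the specific mechanism you propose for resolving that difficulty is wrong, and it is the crux of the whole proof. You suggest conditioning on a fixed partition in the neighborhood $\{G : B(G,T)\le 1/(4\lambda)\}$ and taking a union bound, claiming the combinatorial price is ``absorbed by the $n\alpha^2\ge Ck\log n$ assumption.'' The number of partitions with $B(G,T)\le\gamma$ is of order $\exp\bigl(\gamma n\log(ek/\gamma)\bigr)$ (this is exactly the count appearing in the paper's proof of Lemma 6.5), so union-bounding a per-sample failure probability of $\exp(-\Delta^2/(16\sigma_S^2))$ over this class would require $\Delta^2/\sigma_S^2\gtrsim n$, which obliterates the theorem under the stated separation condition $\Delta\ge C\sigma_S\sqrt{k/\alpha}$; and $n\alpha^2\ge Ck\log n$ has nothing to do with this count (it controls cluster-size concentration and the averaged-noise term in the center estimates). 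What \cite{lu2016statistical} actually does is decouple: the misclassification event for sample $i$ is bounded by an event involving only $w_i$ and the \emph{true} centers, plus error terms that depend on the data-dependent partition only through the scalar aggregate $h_{t-1}$ and through maxima of normalized noise sums over subsets of size at most $\gamma n$ --- and it is only those subset maxima that incur a union bound, whose $\exp(\gamma n\log(e/\gamma))$ cost is beaten by Gaussian concentration at scale $n$, not by the tail $\exp(-\Delta^2/(16\sigma_S^2))$. The terms $4/n$ and $2\exp(-\Delta/\sigma_S)$ come from these global concentration events and a Markov-type bound on the sum of per-sample tail indicators, not from a Chernoff bound on $\max_i\|W_{i,\widehat{S}}\|$. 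As written, your plan would not close; you should either carry out the decoupling argument or simply cite the result, as the paper does.
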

Theorem \ref{thm:lloyd} states that we can achieve an exponentially small mis-clustering error after $\lceil 4\log n \rceil$ Lloyd's iterations given any initializer that satisfies condition (\ref{eq:condlloyd}). Suppose we have selected all the informative features in stage 2. By applying Theorem \ref{thm:spec} on the sub-matrix $B_S$, we obtain
\[ B(G_0, T) \le \frac{C_1 \sigma_S^2 k}{\alpha^2 \sigma_k^2(B_S)} \le \frac{1}{4\lambda} \]
when 
$\sigma_k(B_S) \ge C_2 \sigma_S \sqrt{\lambda k/\alpha^2}$
for some sufficiently large constant $C_2$. 

Combining the results of Theorem \ref{thm:spec}, Theorem \ref{lm:key1} and Theorem \ref{thm:lloyd}, we are able to give theoretical guarantees of our SC-FS algorithm. Let $\hat{z}=\{\hat{z}_1, \cdots, \hat{z}_n\}$ be the estimated labels returned by running SC-FS algorithm with $\tau=0.9$ and $T=\lceil 4\log n \rceil$. The following result upper bounds the mis-clustering error rate of $\hat{z}$.

\begin{thm} \label{thm:main}
Assume $\alpha n \ge C(\log p + k\log n/\alpha + \alpha ks)$, $\textrm{SNR} \ge C$, $\Delta \ge C \sigma_S \sqrt{k/\alpha}$ and 
\begin{equation} \label{eq:eigencond2}
\sigma_k(B_S) \ge \frac{C}{\alpha} \max \left\{ \sigma \sqrt{\frac{k}{\alpha}}, \sigma_S \sqrt{\lambda k},  \left( \frac{\sigma^2 k p }{\alpha^2 n} \right)^{1/4} \right\}
\end{equation}
for a sufficiently large constant $C$. Then
\begin{equation} 
\frac{1}{n} \sum_{i=1}^{n} \mathbf{1}\{\hat{z}_i \neq z_i \} \le \exp \left( - \frac{\Delta^2}{16\sigma_S^2}\right),
\end{equation}
with probability greater than $1-8/n-4\exp(-\Delta/\sigma_S)$.
\end{thm}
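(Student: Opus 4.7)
The plan is to run Theorems \ref{thm:spec}, \ref{lm:key1}, and \ref{thm:lloyd} in sequence along the three stages of SC-FS, and to verify that assumption (\ref{eq:eigencond2}) is engineered to supply exactly the strengthened input that each subsequent stage needs. Concretely, the three entries of the maximum in (\ref{eq:eigencond2}) correspond, respectively, to (i) making the Stage 1 mis-clustering error small enough to feed Theorem \ref{lm:key1}, (ii) making the Stage 3 spectral initializer fall inside the Lloyd basin of attraction (\ref{eq:condlloyd}), and (iii) satisfying the high-dimensional eigenvalue requirement (\ref{eq:eigencond}) itself. The sample-size hypothesis $\alpha n \ge C(\log p + k\log n/\alpha + \alpha ks)$ is likewise the union of the sample-size requirements of the three theorems.

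\emph{Stage 1.} I would first apply Theorem \ref{thm:spec} to the full data $Y$. Condition (\ref{eq:eigencond2}) is strictly stronger than (\ref{eq:eigencond}) (extra factor $1/\alpha$), so the theorem applies and produces an initial partition $G^{(0)}$ with
\[
B(G^{(0)},T) \;\le\; \frac{C_1\sigma^2 k(\alpha n\sigma_k^2(B_S)+p)}{\alpha^3 n\sigma_k^4(B_S)}.
\]
Plugging in the first and third terms of (\ref{eq:eigencond2}) drives both summands below $c\alpha$ for an arbitrary small constant $c$, with probability at least $1-\exp(-C_2 n)$.

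\emph{Stage 2.} With $B(G^{(0)},T)\le c\alpha$ and $\mathrm{SNR}\ge C$, the hypotheses of Theorem \ref{lm:key1} are met. Union bounding over the $p$ features and using $\alpha n\ge C\log p$ yields $\widehat S=S$ with probability at least $1-p\exp(-c\alpha n)\ge 1-\exp(-c'\alpha n)$, so from this point on I may work with the true support $S$.

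\emph{Stage 3 (spectral init) and Stage 4 (Lloyd).} On the event $\widehat S=S$, the sub-vectors $\tilde Y_i$ form an $s$-dimensional Gaussian mixture with mean matrix $B_S$. I apply Theorem \ref{thm:spec} once more with $(p,\sigma)\mapsto(s,\sigma_S)$; the high-dimensional condition (\ref{eq:eigencond}) for this sub-problem is implied by $n\ge ks$ together with the first term of (\ref{eq:eigencond2}). The resulting initializer $z^{(0)}$ satisfies
\[
B(z^{(0)},T)\;\le\;\frac{C_1\sigma_S^2 k}{\alpha^2\sigma_k^2(B_S)}\;\le\;\frac{1}{4\lambda},
\]
where the second inequality is precisely the $\sigma_S\sqrt{\lambda k}$ term of (\ref{eq:eigencond2}). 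Thus condition (\ref{eq:condlloyd}) holds, and together with $\alpha^2 n\ge Ck\log n$, $n\ge ks$, and $\Delta\ge C\sigma_S\sqrt{k/\alpha}$, Theorem \ref{thm:lloyd} gives the exponential bound after $T=\lceil 4\log n\rceil$ iterations. A final union bound over the failure events $\exp(-C_2 n)$, $\exp(-c'\alpha n)$, $\exp(-C_2 n)$, $4/n$, and $2\exp(-\Delta/\sigma_S)$, absorbing $\exp(-C_2 n)$ and $\exp(-c'\alpha n)$ into $O(1/n)$ via $\alpha n\ge C\log p$, produces the stated probability $1-8/n-4\exp(-\Delta/\sigma_S)$.

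The main obstacle is not any single estimate but the bookkeeping: tracking four different noise scales ($\sigma$ versus $\sigma_S$, full versus restricted sub-Gaussian constants), two dimensions ($p$ versus $s$), and verifying that the three maxima in (\ref{eq:eigencond2}) simultaneously yield the small-error, basin-of-attraction, and eigenvalue-gap inequalities required at each handoff between theorems. Once that accounting is laid out carefully, the result is a direct concatenation of the three stage-level theorems with no additional probabilistic or spectral arguments needed.
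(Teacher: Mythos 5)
Your proposal is correct and follows essentially the same route as the paper: the paper's own justification of Theorem \ref{thm:main} is exactly the concatenation you describe — Theorem \ref{thm:spec} on the full data to get $B(G,T)\le c\alpha$, Theorem \ref{lm:key1} with a union bound over features to get $\widehat S=S$, Theorem \ref{thm:spec} again on $B_S$ to land in the Lloyd basin (\ref{eq:condlloyd}) via the $\sigma_S\sqrt{\lambda k}$ term, and Theorem \ref{thm:lloyd} to finish — with the exponential failure probabilities absorbed into $O(1/n)$ using $\alpha n\ge C\log p$. Your accounting of which term of (\ref{eq:eigencond2}) serves which handoff matches the paper's (and is, if anything, spelled out more explicitly than in the paper itself).
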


By Theorem 3.3 in \cite{lu2016statistical}, the minimax lower bound for clustering Gaussian mixture model is $\exp \left( - \frac{\Delta^2}{8\sigma_S^2}\right)$. The worst case constructed in \cite{lu2016statistical} can be naturally generalized to the sparse Gaussian mixture model. Therefore, the proposed SC-FS algorithm is rate-optimal up to a constant factor in the exponent. Note that the mis-clustering rate only takes value in $\{0, 1/n, 2/n, \cdots, 1\}$. Theorem \ref{thm:main} guarantees a perfect clustering when $\Delta > 4\sigma_S \log n $. 

\subsection{Tuning parameter selection}
\label{subsec:tuning}
\subsubsection{Number of clusters}
For each possible $k = 1,...,20$, we conduct the following steps:
\begin{enumerate}
\item
Conduct SVD on data matrix and obtain top $k$ left singular vectors as matrix $U\in \mathbb{R}^{n\times k}$. 
\item
Conduct $k$-means clustering algorithm of $U$.
\item
Calculate the ratio of within cluster sum of squares and total sum of squares as unexplained variation ratio $\eta(k)$. And let $\xi(k) = 1-\eta(k)$ be the variation explained ratio.
\end{enumerate}
We plot $\xi(k)$ versus $k$ and select the change point as the number of clusters. 

\subsubsection{Feature Selection Threshold}
\label{sec:sparsity}
The actual threshold depends on the error rate of initializer and on the quality $\alpha n / \log p$. As suggested by Theorem \ref{lm:key1}, we could use $\tau = 0.9$ as a practical guidance of the feature selection threshold.

\section{Numerical Experiments}
\label{sec:experiments}
\subsection{Synthetic data generation}
\label{exp: syn_data_gen}
Let $k$ be the number of clusters, $n$ be the number of samples, $p$ be the number of features, $s$ be the number of informative features, and $\sigma_{k}$ be the signal strength introduced in Theorem \ref{thm:spec}. 
For a set of ($k$, $n$, $p$, $s$, $\sigma_{k}$), we generate data as follows:
\begin{enumerate}
\item
Generate elements of $\tilde{B}\in \R^{k\times s}$ as left singular matrix of $i.i.d.$ $s\times s$ standard Gaussian random matrix. We get $B\in \R^{k\times p}$ as $B = [\sigma_k\tilde{B},\bd{0}_{k\times(p-s)}]$. 
\item
Generate the cluster label $z_i\in\{1,...,k\}$ of the $i$th sample by randomly assigning. Then generate membership matrix $Z\in\R^{n\times k}$ with $Z_{ij} = \mathbbm{1}{(j=z_i)}$.
\item
Generate data matrix $Y = ZB + W$, where $W$ is standard Gaussian noise matrix (or $t_2$ noise matrix if specified). Then we scale the columns of the data matrix.
\end{enumerate} 

\subsection{Convergence rate of spectral clustering}
In this simulation, we numerically evaluated the convergence rate of spectral clustering. To study the effect of the number of features $p$ on the error rate of spectral clustering, we fixed the number of clusters $k=4$, the number of observations $n=100$, the number of features $p=100$, the number of informative features $s=100$, and the signal strength $\sigma_k = 4$. We varied $p$ from $100$ to $1000$, $n$ from $100$ to $1000$, and $\sigma_k$ from $2$ to $5$ to study the error convergence rate regarding each factor ($n$, $p$, or $\sigma_k$) with two other factors fixed. For each setting of ($k$, $n$, $p$, $s$, $\sigma_k$), we generated synthetic data according to Section \ref{exp: syn_data_gen} with Gaussian noise and applied spectral clustering according to Algorithm \ref{alg:spectralclustering}. We repeated the above process for 50 times and computed the average error rate. The scatter plots are shown in Figure \ref{fig:err_rate_check}. We observe a linear relationship between error rate and $p$, and also expected rate for $n$ and $\sigma_k$. 

In terms of spectral clustering with sparse informative features, we can improve the clustering result to a great extent if the number of informative features $s$ is much smaller than the total number of features, given that we have selected all informative features. Even if we fail to select all informative features, we can still have a better clustering result as long as we have selected enough features such that signal-to-noise ratio does not decrease too much after feature selection.  

\begin{figure}[ht]
	\centering
	\subfigure{\includegraphics[width=50mm]{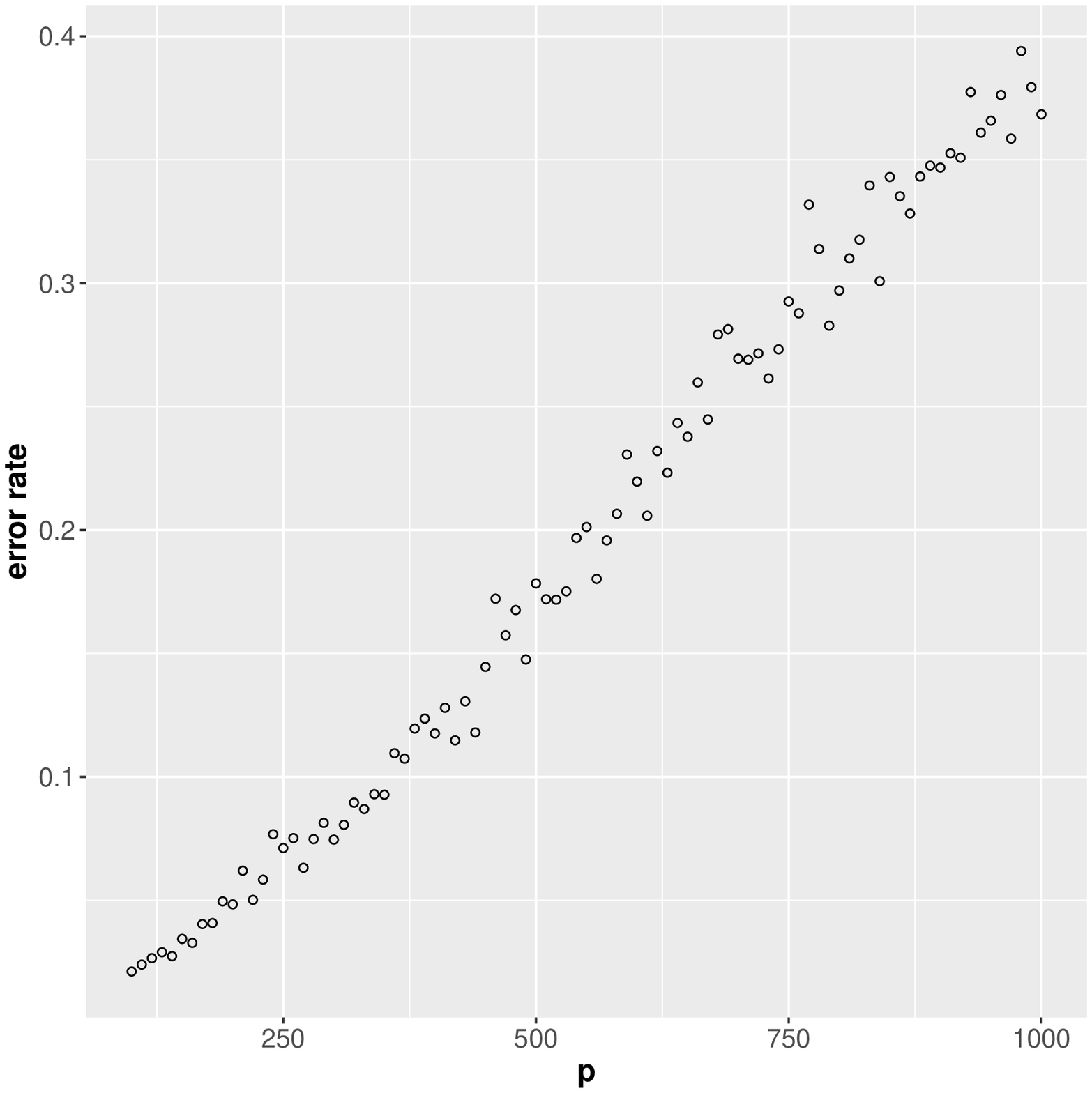}}
	~ \subfigure{\includegraphics[width=50mm]{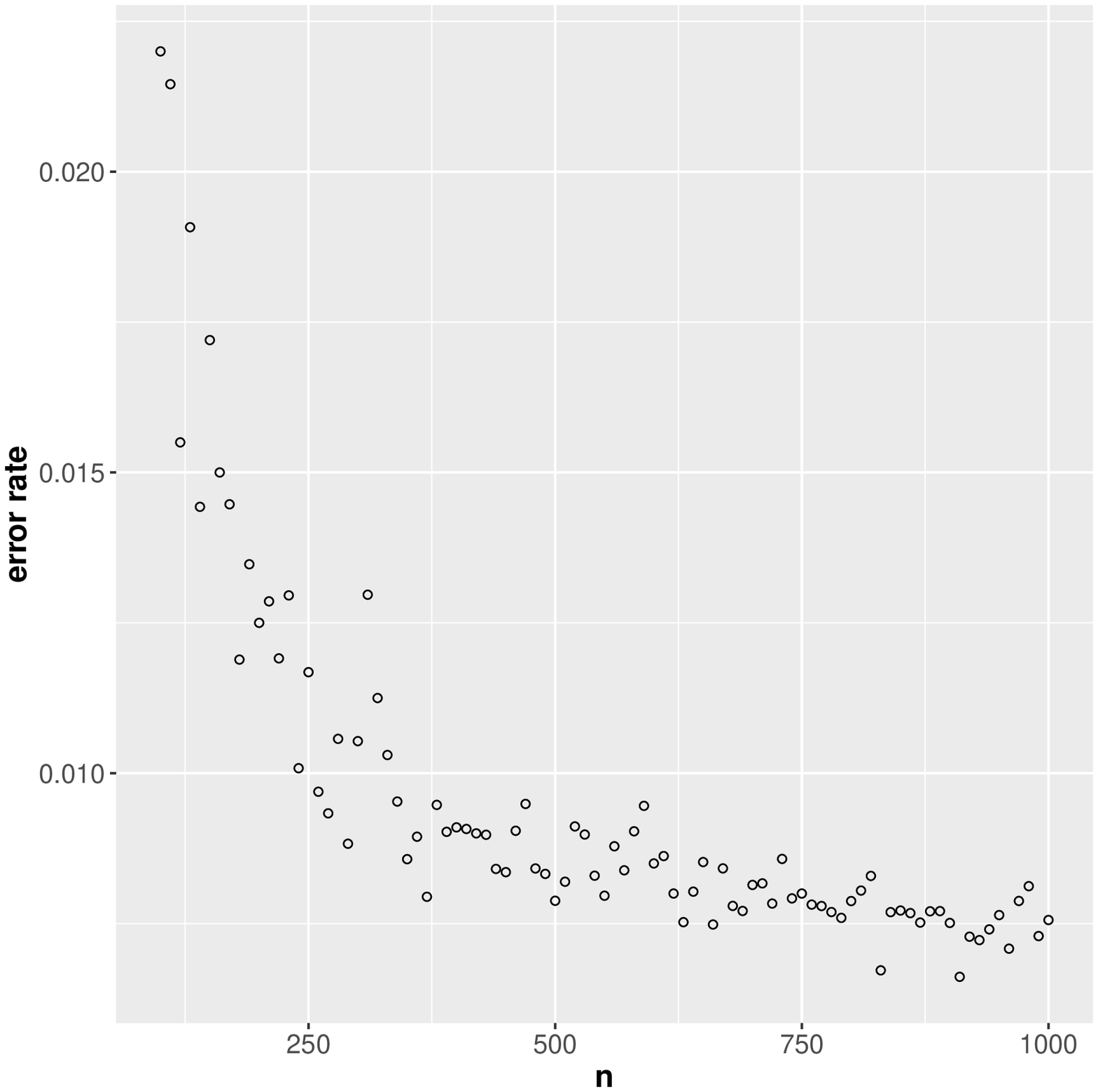}}
	~ \subfigure{\includegraphics[width=50mm]{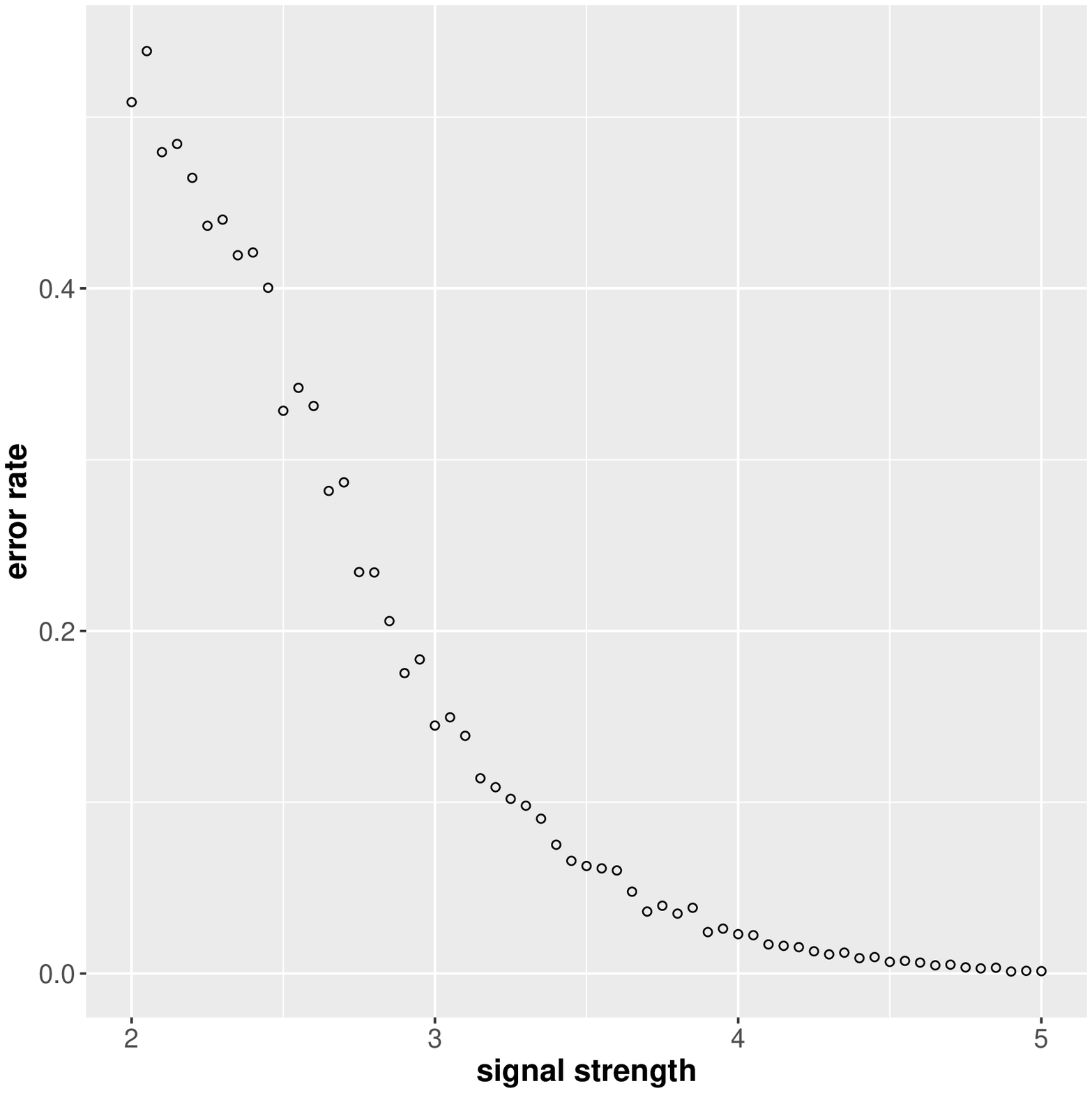}}
	\caption{Convergence rate of error rate}
	\label{fig:err_rate_check}
\end{figure}

\subsection{Feature selection $F_1$}
In this simulation, we studied the relationship between feature selection success metrics and quality of initial guess. We fixed $k=4$, $s=100$, $p=500$, and varied $n\in\{10,50,100\}\log p$.  Let the true label of the $i$th observation be $l_i$, and the initial guessed label be $\hat{l}_i$. We define the initial guess error rate as: 
$$
\mathcal{M}( l,\hat{l} ) =\frac{1}{n} \min_{\pi} {\left|\left\{i : l_i\neq\pi(\hat{l}_i)\right\}\right|}.
$$   
We create guessed labels with the given error rate taking values from $\{0.05,0.1,0.15,0.2,0.3\}$. We set $\sigma_k \in \{5,10\}$. 

Let $\mathcal{S}$ be the set of true informative features with $|\mathcal{S}| = s$, and $\hat{\mathcal{S}}$ be the set of estimated informative features based on $R^2$ Algorithm \ref{alg:select}. We compute the $F_1$ score to measure the feature selection quality. For a set of ($n$, $p$, $s$, $\sigma_k$), we generated data as described in Section \ref{exp: syn_data_gen} and repeated the experiment 50 times. Given the membership matrix $Z$, we generated the guessed label $\tilde{Z}$ equal to $Z$ with probability $1-\eta$, and equal to one of other $k-1$ values with equal probability $\eta/(k-1)$.

We can observe that as signal strength $\sigma_k$ increases, $n$ increases, and mis-clustering rate of initial guess decreases, the feature selection performance improves (Table \ref{tab:fs_init_guess}). When the signal strength and number of samples are large enough, the selected features are of high quality. This observation is consistent with Equation (\ref{eq:r2_express}) and Theorem \ref{lm:key1}.

\begin{table}[ht]
\centering
\caption{Feature selection $F_1$ scores averaged over 50 runs. Numbers in the brackets are the standard deviations.}
\label{tab:fs_init_guess}
\begin{tabular}{|c|c|c|c|c|c|c|}
\hline
&&\multicolumn{5}{|c|}{initial guess error rate}\\\hline
$\sigma_k$&$n/\log p$&0.05&0.1&0.15&0.2&0.3\\\hline
5  & 10  & 0.620 (0.032) & 0.604 (0.037) & 0.578 (0.044) & 0.540 (0.052) & 0.456 (0.071) \\\hline
5  & 50 & 0.744 (0.027) & 0.698 (0.042) & 0.626 (0.045) & 0.548 (0.049) & 0.311 (0.072) \\\hline
5  & 100 & 0.742 (0.034) & 0.671 (0.039) & 0.593 (0.04)  & 0.507 (0.045) & 0.256 (0.067) \\\hline
10 & 10  & 0.736 (0.027) & 0.731 (0.024) & 0.720 (0.027) & 0.701 (0.031) & 0.664 (0.040) \\\hline
10 & 50 & 0.958 (0.013) & 0.949 (0.014) & 0.935 (0.018) & 0.909 (0.025) & 0.821 (0.045) \\\hline
10 & 100 & 0.956 (0.015) & 0.948 (0.016) & 0.932 (0.019) & 0.908 (0.022) & 0.816 (0.032) \\\hline
\end{tabular}
\end{table}

\subsection{Comparisons on Synthetic Data}
\subsubsection{Gaussian noise}\label{subsubsec:gaus}
In this simulation, we fixed $k=4$, $p=8000$, $s=500$, $\sigma_k = 6$, and $n/\log p = 15,20,25,30$. we generated synthetic data according to Section \ref{exp: syn_data_gen}. We denote SC-FS1 as spectral clustering in stage 3, and SC-FS2 as Lloyd iteration following SC-FS1. We compared our methods SC-FS1 and SC-FS2 with spectral clustering, spectral plus Lloyd clustering (specLloyd, for short) \citep{lu2016statistical}, model-based clustering (mclust) \citep{scrucca2016mclust}, and sparse K-means (spKmeans, for short) \citep{witten2012framework}. As shown in Table \ref{tab:comp_normal}, our proposed methods performed the best and Lloyd iteration in stage 3 improved SC-FS1 to a small extent. By comparing specLloyd with the proposed method, we can observe that feature selection in stage 2 can reduce the error rate.

\begin{table}[ht]
\centering
\caption{Comparisons of different methods under Gaussian noise averaged over 50 runs. Numbers in the parenthesis are the standard deviations of the error rate.}
\label{tab:comp_normal}
\begin{tabular}{|c|c|c|c|c|c|}
\hline
$n/\log p$&specLloyd&mclust&spKmeans&SC-FS1&SC-FS2\\\hline
15&0.541(0.065)&0.606(0.034)&0.612(0.051)&0.539(0.076)&\textbf{0.524}(0.072)\\\hline
20&0.406(0.104)&0.626(0.046)&0.463(0.079)&0.392(0.068)&\textbf{0.391}(0.103)\\\hline
25&0.277(0.088)&0.561(0.050)&0.302(0.146)&0.208(0.097)&\textbf{0.202}(0.085)\\\hline
30&0.196(0.037)&0.601(0.040)&0.081(0.132)&0.054(0.026)&\textbf{0.053}(0.024)\\\hline
\end{tabular}
\end{table}

\subsubsection{Heavy-tailed noise}
In this simulation, we compare the methods in heavy-tailed noise case to study the robustness of the proposed method. We followed the same setting as in \ref{subsubsec:gaus} in generating the synthetic data, except that we used standard $t_2$ distribution to generate noise. The proposed approach shows advantage under the heavy-tailed noise case (Table \ref{tab:comp_t}), while spKmeans does not converge well with sample size growth. To some extend, this suggests that our proposed approach is robust to heavy-tailed noise.
\begin{table}[ht]
\centering
\caption{Comparisons of different methods under $t_2$ noise averaged over 50 runs. Numbers in the parenthesis are the standard deviations of the error rate.}
\label{tab:comp_t}
\begin{tabular}{|c|c|c|c|c|c|}
\hline
$n/\log p$&specLloyd&mclust&spKmeans&SC-FS1&SC-FS2\\\hline
15 & 0.478(0.082) & 0.612(0.076) & 0.673(0.029) & 0.516(0.111) & \textbf{0.468}(0.115) \\\hline
20 & 0.367(0.090) & 0.543(0.151) & 0.682(0.039) & 0.335(0.175) & \textbf{0.298}(0.159) \\\hline
25 & 0.256(0.078) & 0.461(0.213) & 0.705(0.019) & 0.189(0.136) & \textbf{0.175}(0.141) \\\hline
30 & 0.199(0.071) & 0.372(0.272) & 0.712(0.016) & 0.119(0.134) & \textbf{0.102}(0.102) \\\hline
\end{tabular}
\end{table}

\subsection{Real Data}
\label{sec:real}
\subsubsection{Dataset description}  
{We compared clustering results of our method with other methods on four publicly available high-dimensional datasets. We selected these datasets because they represent a wide range of high-dimensional data with different numbers of data points and classes from various fields. Characteristics of the four real datasets are summarized in Table \ref{tab:char_sc_data}.} The details of four datasets are as follows:

\begin{enumerate}
\item
{Zheng: The Peripheral blood mononuclear cells (PBMC) scRNA-seq data were generated by the 10x Genomics GemCode protocol. We obtained the data from the package DuoClustering2018 \citep{duo2019package} with ExperimentHub ID “EH1532”. The data consist of eight cell types in approximately equal proportions. We first performed library size normalization through dividing the counts by the total UMI in that cells, multiplying the resulting fraction by 10,000, and doing log transformation. Then, feature scaling is carried out using the function scale.}
\item
{Yeoh: The bone marrow microarray data were downloaded from R package datamicroarray \citep{ramey2016datamicroarray}. The 248 samples were obtained from pediatric acute lymphoblastic leukemia patients with six subtypes, including T-ALL, E2A-PBX1, TEL-AML1, BCR-ABL, MLL, and HK50. The number of features, i.e. genes, is 12,625.}
\item

{BBC: This dataset has 2,225 articles with 1,490 for training and 735 for testing. Each article has one label from five categories: business, entertainment, politics, sports or tech. We downloaded the data from \cite{greene06icml} and used the training data to compare among different clustering algorithms. We did not use test data because there are no labels available from the dataset. The 1,490 articles with five categories were processed by term frequency–inverse document frequency (tf-idf) vectorizer. We obtained 24,746 features as a result.}
\item

{Agnews: This dataset is a collection of more than 1 million news articles. The AG's news topic classification dataset was constructed by choosing four largest classes from the original corpus. Each class contains 30,000 training samples and 1,900 testing samples. The total number of training samples is 120,000 and that of testing samples is 7,600. We downloaded the data from \cite{zhang2015character} and used the test set to compare different clustering algorithms. We used the test data because it has thousands of examples with tens of thousands of features (after tf-idf), which fits the high-dimensional setting. The 7,600 articles with four categories are also processed by tf-idf vectorizer. We obtained 21,853 features as a result.}

\end{enumerate}


\begin{table}[ht]
\centering
\caption{Summary of characteristics of the four real datasets}
\label{tab:char_sc_data}
\begin{tabular}{|c|c|c|c|}
\hline
Dataset & \# data & \# features & \# classes\\\hline
Zheng&3994&15716&8\\\hline
Yeoh&248&12625&6\\\hline
BBC&1490&24746&5\\\hline
agnews&7600&21853&4\\\hline
\end{tabular}
\end{table}

\subsubsection{Numerical comparisons among different methods}

{We performed comparisons of SC-FS on the four datasets to test its performance with three other methods including spectral clustering \citep{rohe2011spectral}, sparse K-means \citep{witten2012framework}, and K-means \citep{macqueen1967some}. For sparse K-means, we subsampled 1,500 data points for Zheng, Yeoh, agnews to avoid run time and memory issues. The adjusted Rand index (ARI) is shown in Table \ref{tab:ari_real}. SC-FS2 performed the best on three out of four datasets, and SC-FS1 resulted in the highest ARI on the remaining dataset, followed by spectral clustering.}

\begin{table}[ht]
\centering
\caption{ARI on four real datasets}
\label{tab:ari_real}
\begin{tabular}{|c|c|c|c|c|c|c|c|}
\hline
Dataset& SC-FS1 & SC-FS2 & spectral & spKmeans  & Kmeans\\\hline
Zheng &0.431&\textbf{0.437}&0.330&0.418&0.319\\\hline
Yeoh &\textbf{0.647}&0.579&0.554&0.337&0.258\\\hline
BBC &0.647&\textbf{0.658}&0.647&0.0440&0.573\\\hline
agnews &0.192&\textbf{0.205}&0.201&0.0151&0.180\\\hline
\end{tabular}
\end{table}

 


\section{Conclusions}
\label{sec:dis}
In this article, we proposed a three-stage algorithm that is minimax optimal for estimating the underlying cluster labels under the generative model of sparse Gaussian mixture model (\ref{eq:model}). Our method is able to identify all informative features given any initial estimator with $o(1)$ clustering error and theoretically verified the optimality of proposed method under sparse Gaussian mixture assumptions. We further demonstrated the power of the methods via extensive simulation studies and real data analysis. For further directions, it is interesting to explore the performance of our algorithm under other generative models with heavy tails. Based on the proposed framework, it is also interesting to compare other clustering and feature selection methods including nonlinear methods such as kernel methods and neural networks.  

\section*{Data Availability Statement}
The data that support the findings in this paper are openly available in Kaggle BBC (Broadcasting company) News Classification at \url{https://www.kaggle.com/c/learn-ai-bbc}, and AG News at \url{https://github.com/mhjabreel/CharCnn_Keras/tree/master/data/ag_news_csv}.

\bibliographystyle{biom}
\bibliography{scafs.bib}

\section*{Supporting Information}
Web Appendices, Tables, and Figures referenced in Sections \ref{sec:experiments} are available with this paper at the Biometrics website on Wiley Online Library. The code is available both on the Biometrics website and at \url{https://github.com/TerenceLiu4444/SCFS}.

\section{Proofs}
\label{sec:proof}
\subsection{Proof of Proposition 2.1}
For brevity, we denote $Y_{ij}$ as $X_j$, $T_i$ as $T$, and $\widetilde{T}_i$ as $\widetilde{T}$. For $j\in S$, by the decomposition of variance, we have
\begin{align*}
\Var(X_{j}|\tilde{T}=1) &= \E\Var(X_j|\widetilde{T}=1,T) + \Var ( \E[X_j|\widetilde{T}=1,T]) \\
				& = \sigma_j^2 + \Var \left ( (2T-3)\theta_j|\widetilde{T}=1 \right)
				\\
				& = \sigma_j^2+\theta_j^2 \left[ 1 -  \left( \frac{a_{11}-a_{21}}{a_{11}+a_{21}} \right)^2\right]
\end{align*}
For the above last equality, it is because
$\E[(2T-3)^2|\widetilde{T}=1] = 1$
and $\E[(2T-3)|\widetilde{T}=1] = \frac{a_{11}-a_{21}}{a_{11}+a_{21}}$.

Similarly
$$
\Var(X_{j}|\widetilde{T}=2)= \sigma_j^2+\theta_j^2 \left[ 1 -  \left( \frac{a_{22}-a_{12}}{a_{22}+a_{12}} \right)^2\right]
$$

Notice $\P(\widetilde{T}=1) = a_{11} + a_{21}$, $\P(\widetilde{T}=2) = a_{12} + a_{22}$, and $a_{11} + a_{21} + a_{12} + a_{22} = 1$, we have
\begin{align*}
\E \left( \Var(X_{j}|\widetilde{T}) \right)=& \ \P(\widetilde{T}=1)\left\{\sigma_j^2+\theta_j^2 \left[ 1 -  \left( \frac{a_{11}-a_{21}}{a_{11}+a_{21}} \right)^2\right]\right\} \\
& + \P(\widetilde{T}=2)\left\{\sigma_j^2+\theta_j^2 \left[ 1 -  \left( \frac{a_{22}-a_{12}}{a_{22}+a_{12}} \right)^2\right]\right\} \\
=& \sigma_j^2 + \theta_j^2\left[ 1-\left( \frac{(a_{11}-a_{21})^2}{(a_{11}+a_{21})}+ \frac{(a_{22}-a_{12})^2}{(a_{22}+a_{12})} \right) \right].
\end{align*}
On the other hand,
$$
\Var(X_j)=\E\Var(X_j|T) + \Var(\E[X_j|T]) = \sigma_j^2+\theta_j^2.
$$ 
Then,
\begin{equation}
R_j^2 = \frac{\theta_j^2}{\theta_j^2+\sigma_j^2} \left(\frac{(a_{11}-a_{21})^2}{a_{11}+a_{21}}+ \frac{(a_{22}-a_{12})^2}{a_{22}+a_{12}} \right).
\end{equation}
\subsection{Proof of Theorem 3.1} \label{sec:specproof}
The main proof idea of Theorem 3.1 follows from \citep{lei2013consistency}. Its proof is modular, which is based on two existing results in the literature. First, we need a perturbation bound on the eigenspaces. The traditional Wedin's sin $\Theta$ Theorem gives the same perturbation bound for the left and right singular subspaces, which is sub-optimal under our setting. To capture the high-dimensional structure ($p \gg n$), we utilize the results in \citep{cai2016rate}. 

\begin{lemma} \label{lm:sintheta}
Suppose $X \in \mathbb{R}^{n \times p}$ is a rank $k$ matrix and $Z \in \mathbb{R}^{n \times p}$ whose entries are independent sub-gaussian random variables satisfying $\mathbb{E} e^{tZ_{ij}} \le e^{t^2 \sigma^2/2}$ for any $t>0$. Let $U$ be the left singular vectors of $X$ and $\widehat{U}$ be the top $k$ leading left singular vectors of $Y = X+Z$. Then there exist constants $C_1$ and $C_2$ such that 
\[ \inf_{O \in \mathbb{O}_k} \| \widehat{U} - UO \|_F \le \frac{C_1 \sigma \sqrt{kn (\sigma_k^2(X)+p)}}{\sigma_k^2(X)} \]
with probability greater than $1-\exp(-C_2n)$. Here $\mathbb{O}_k = \{ A \in \mathbb{R}^{k \times k}, A^T A = \mathbf{I}_k \}$ is the set of $k$-dimensional orthogonal matrices. 
\end{lemma}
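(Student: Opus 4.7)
The plan is to reduce the claim to the rate-optimal singular subspace perturbation theorem of \citet{cai2016rate}, then discharge each resulting block of $Z$ with standard sub-Gaussian matrix concentration.

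First, I would block-decompose the noise in the SVD coordinates of the signal. Let $X = UDV^\top$ be the compact SVD and complete to orthogonal bases $[U,U_\perp]\in O(n)$ and $[V,V_\perp]\in O(p)$. Set
\[
\begin{pmatrix} Z_{11} & Z_{12} \\ Z_{21} & Z_{22} \end{pmatrix} \;=\; [U,U_\perp]^\top Z [V,V_\perp],
\]
so that $Z_{21}=U_\perp^\top Z V$, $Z_{12}=U^\top Z V_\perp$, $Z_{22}=U_\perp^\top Z V_\perp$. The left-singular-subspace sin-Theta bound of \citet{cai2016rate}, applied to the exactly rank-$k$ signal, gives under an eigengap condition $\sigma_k^2(X)\gtrsim \|Z_{22}\|_{\mathrm{op}}^2$ that
\[
\|\sin\Theta(\widehat U, U)\|_F \;\lesssim\; \frac{\sigma_k(X)\,\|Z_{21}\|_F + \|Z_{22}\|_{\mathrm{op}}\,\|Z_{12}\|_F}{\sigma_k^2(X)} ,
\]
and then $\inf_{O \in \mathbb{O}_k} \|\widehat U - UO\|_F \le \sqrt{2}\,\|\sin\Theta(\widehat U, U)\|_F$ by a standard conversion.

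Next I would concentrate each block, using that $U, V$ are deterministic bases and $Z$ has independent sub-Gaussian entries of proxy $\sigma^2$: a chi-square-style tail for the Frobenius norm of a linearly-rotated sub-Gaussian matrix yields $\|Z_{21}\|_F \le \|ZV\|_F \lesssim \sigma\sqrt{nk}$ with probability $\ge 1-e^{-cn}$; symmetrically $\|Z_{12}\|_F \le \|U^\top Z\|_F \lesssim \sigma\sqrt{kp}$ with probability $\ge 1-e^{-cn}$; and by Davidson--Szarek, $\|Z_{22}\|_{\mathrm{op}} \le \|Z\|_{\mathrm{op}} \lesssim \sigma(\sqrt{n}+\sqrt{p})$ with probability $\ge 1-e^{-cn}$. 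Substituting into the displayed inequality, under the implicit gap condition $\sigma_k^2(X) \gtrsim \sigma^2(n+p)$ (absorbed into $C_1$, since otherwise the target already exceeds the trivial ceiling $\|\widehat U - UO\|_F \le \sqrt{2k}$), the two terms collect into $\tfrac{C_1 \sigma\sqrt{kn(\sigma_k^2(X)+p)}}{\sigma_k^2(X)}$. A union bound over the three events retains probability $1-\exp(-C_2 n)$.

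\textbf{Main obstacle.} The classical Wedin theorem would contribute a factor $\|Z\|_{\mathrm{op}}/\sigma_k(X) \sim \sigma\sqrt{p}/\sigma_k(X)$ in the numerator, which is too weak in the high-dimensional regime $p \gg n$ and is exactly what causes the naive spectral-clustering bound to scale like $\sqrt{p/n}$ rather than the desired $(p/n)^{1/4}$. The asymmetric Cai--Zhang bound repairs this by exploiting that the cross block $Z_{21}=U_\perp^\top Z V$ has only $k$ columns, so its Frobenius norm is $O(\sigma\sqrt{nk})$ rather than $O(\sigma\sqrt{np})$. The delicate point is threading the operator norm of $Z_{22}$ (not its Frobenius norm) through the second-order term in the perturbation expansion; once this is in place, the sub-Gaussian concentration and union bound in Step 3 are routine.
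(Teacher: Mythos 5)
Your overall strategy --- recognizing that Wedin's theorem is too weak and that the left/right asymmetry of \citet{cai2016rate} must be exploited --- is the right instinct, but the specific reduction does not prove the lemma. You invoke the \emph{deterministic} block-perturbation bound of Cai--Zhang (their Theorems 1--2) and then discharge each block with worst-case concentration. The resulting cross term is
\[
\frac{\|Z_{22}\|_{\mathrm{op}}\,\|Z_{12}\|_F}{\sigma_k^2(X)}\;\asymp\;\frac{\sigma\sqrt{p}\cdot\sigma\sqrt{kp}}{\sigma_k^2(X)}\;=\;\frac{\sigma^2\sqrt{k}\,p}{\sigma_k^2(X)},
\]
which exceeds the target $\sigma\sqrt{kn(\sigma_k^2(X)+p)}/\sigma_k^2(X)$ by a factor of order $\sigma\sqrt{p/n}$ throughout the regime $\sigma^2 p\lesssim\sigma_k^2(X)\lesssim \sigma^2 p^2/n$, and below $\sigma_k^2(X)\asymp\sigma^2p$ the deterministic bound is not even applicable, since its denominator needs $\sigma_k^2(X)-\|Z_{22}\|_{\mathrm{op}}^2>0$. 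Your claim that this gap condition can be ``absorbed into $C_1$'' is false: the lemma's right-hand side drops below the trivial ceiling $\sqrt{2k}$ as soon as $\sigma_k^2(X)\gtrsim\sigma\sqrt{np}+\sigma^2n$, a strictly weaker requirement than $\sigma_k^2(X)\gtrsim\sigma^2p$ when $p\gg n$. The window $\sigma\sqrt{np}\lesssim\sigma_k^2(X)\lesssim\sigma^2p$ is exactly where Theorem 3.1 operates --- its condition asks only for $\sigma_k(B_S)\gtrsim(\sigma^2kp/(\alpha^2n))^{1/4}$, i.e.\ $\sigma_k^2(ZB)\gtrsim\sigma\sqrt{np}$ up to constants --- so the $(p/n)^{1/4}$ improvement, which is the whole point of the lemma, is lost.

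The paper's proof avoids this by using the \emph{probabilistic} machinery of Cai--Zhang (their Proposition 1 and Lemma 4, underlying their Theorem 3) rather than the deterministic block bound. The key inequality is
\[
\|\sin\Theta(\widehat U,U)\|^2\le\frac{\sigma_k^2(U^TY)\,\|U_\perp^TY\,\mathbb{P}_{U^TY}\|^2}{\bigl(\sigma_k^2(U^TY)-\sigma_{k+1}^2(Y)\bigr)^2},
\]
combined with three high-probability events: $\sigma_k^2(U^TY)\ge\sigma_k^2(X)+p-\sigma_k^2(X)/3$, $\sigma_{k+1}^2(Y)\le p+\sigma_k^2(X)/3$ (so the denominator is $\gtrsim\sigma_k^4(X)$ without ever requiring $\sigma_k^2(X)\gtrsim\sigma^2 p$), and, crucially, $\|U_\perp^TY\,\mathbb{P}_{U^TY}\|\le C\sigma\sqrt{n}$. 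This last bound is the source of the improvement: $U_\perp^TY=U_\perp^TZ$ is hit by a rank-$k$ projection whose range is determined by $U^TY$, which is decoupled from $U_\perp^TZ$, so one pays $\sqrt{n}$ (effectively an $(n-k)\times k$ sub-Gaussian matrix) instead of $\|Z_{22}\|_{\mathrm{op}}\asymp\sigma\sqrt{p}$. That decoupling is precisely the ``delicate point'' you flag at the end of your proposal, and it cannot be discharged by the routine per-block concentration you describe; it requires the data-dependent projection argument.
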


Another key ingredient of our proof is the error bound for approximate k-means from \citep{lei2013consistency}.
\begin{lemma} \label{lm:appkmeans}
For $\epsilon>0$ and any two matrices $\widehat{U}, U \in \mathbb{R}^{n \times k}$ such that $U = ZQ$ with $Z \in \mathcal{Z}$ and $Q \in \mathbb{R}^{k \times k}$, let $\widehat{Z}, \widehat{Q}$ be a $(1+\epsilon)$-approximate solution to the $k$-means problem in equation (2) from the paper and $\widetilde{U} = \widehat{Z} \widehat{Q}$. For any $\delta_a \le \min_{b \neq a} \|Q_{b*} - Q_{a*}\|$, define $S_a = \{b \in T_a,  \|\widetilde{U}_{b*} - U_{b*}\| \ge \delta_a/2 \}$, then
\begin{equation}
\sum_{a=1}^{k} |S_a| \delta_a^2 \le 4(4+2\epsilon) \|\widehat{U} - U\|_F^2.
\end{equation}
Moreover, if
\begin{equation} \label{eq:cond1}
(16+8\epsilon) \| \widehat{U} - U \|_F^2 < n_a^* \delta_a^2 \qquad \textrm{for all } a \in [k],
\end{equation}
then there exists a permutation matrix $J \in \mathbb{R}^{k \times k}$ such that $(\widehat{Z}J)_{i*}= Z_{i*}$ for all $i \in \bigcup_{a=1}^{k} (T_a \backslash S_a)$.
\end{lemma}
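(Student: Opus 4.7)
\medskip

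\noindent\textbf{Proof proposal for Lemma \ref{lm:appkmeans}.}
The statement is entirely deterministic: nothing random is assumed about $\widehat{U}$, only that $\widetilde{U}=\widehat{Z}\widehat{Q}$ is a $(1+\epsilon)$-approximate minimizer of $\|\widehat{U}-\overline{Z}\,\overline{Q}\|_F^2$ over $(\overline{Z},\overline{Q})\in \mathcal{Z}\times \mathbb{R}^{k\times k}$. My plan has two short steps mirroring the two conclusions of the lemma.

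\medskip

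\noindent\emph{Step 1: the Frobenius bound on $\sum_a |S_a|\delta_a^2$.} Since $U=ZQ$ is itself a feasible pair in the k-means minimization, the approximation guarantee gives
\begin{equation*}
\|\widehat{U}-\widetilde{U}\|_F^2 \;\le\; (1+\epsilon)\,\|\widehat{U}-U\|_F^2 .
\end{equation*}
Combining this with the triangle inequality and the elementary bound $(x+y)^2\le 2x^2+2y^2$ yields
\begin{equation*}
\|\widetilde{U}-U\|_F^2 \;\le\; 2\|\widetilde{U}-\widehat{U}\|_F^2 + 2\|\widehat{U}-U\|_F^2 \;\le\; (4+2\epsilon)\,\|\widehat{U}-U\|_F^2.
\end{equation*}
By definition of $S_a$, every row $b\in S_a$ contributes at least $\delta_a^2/4$ to the left-hand side, so
\begin{equation*}
\tfrac14\sum_{a=1}^k |S_a|\,\delta_a^2 \;\le\; \sum_{a=1}^k\sum_{b\in S_a} \|\widetilde{U}_{b*}-U_{b*}\|^2 \;\le\; \|\widetilde{U}-U\|_F^2 \;\le\; (4+2\epsilon)\,\|\widehat{U}-U\|_F^2,
\end{equation*}
which is the first conclusion after multiplying by $4$.

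\medskip

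\noindent\emph{Step 2: recovering a permutation under the separation condition.} The assumption $(16+8\epsilon)\|\widehat{U}-U\|_F^2 < n_a^* \delta_a^2$ together with Step 1 forces $|S_a|<n_a^*$, so $T_a\setminus S_a\neq\emptyset$ for every $a\in[k]$. For any $b\in T_a\setminus S_a$ we have $U_{b*}=Q_{a*}$ (because $U=ZQ$) and $\widetilde{U}_{b*}=\widehat{Q}_{c*}$ where $c=c(b)\in[k]$ is the cluster $\widehat{Z}$ assigns to $b$, and by the defining inequality of $S_a$,
\begin{equation*}
\|\widehat{Q}_{c(b)*}-Q_{a*}\| \;<\; \delta_a/2.
\end{equation*}
Now I would argue by contradiction that the map $a\mapsto c(b)$ (for any choice of $b\in T_a\setminus S_a$) is well-defined and injective, hence a permutation. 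Well-definedness: if $b,b'\in T_a\setminus S_a$ had $c(b)\ne c(b')$, then
\begin{equation*}
\|\widehat{Q}_{c(b)*}-\widehat{Q}_{c(b')*}\|\le \delta_a/2+\delta_a/2=\delta_a,
\end{equation*}
but $\widehat{Q}_{c(b)*}$ and $\widehat{Q}_{c(b')*}$ are distinct rows (otherwise the centers collapse and the k-means objective cannot be approximately optimal). Injectivity: if $a\ne a'$ both had $c=c(b)=c(b')$ for some $b\in T_a\setminus S_a$, $b'\in T_{a'}\setminus S_{a'}$, the triangle inequality gives
\begin{equation*}
\|Q_{a*}-Q_{a'*}\| \;\le\; \delta_a/2+\delta_{a'}/2 \;\le\; \|Q_{a*}-Q_{a'*}\|,
\end{equation*}
using $\delta_a,\delta_{a'}\le \|Q_{a*}-Q_{a'*}\|$ from the hypothesis on $\delta_a$; this is strict contradiction with Step 1's strict inequality $\|\widetilde{U}_{b*}-U_{b*}\|<\delta_a/2$. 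The permutation $J$ is then obtained by inverting this bijection, and by construction $(\widehat{Z}J)_{i*}=Z_{i*}$ for all $i\in\bigcup_a (T_a\setminus S_a)$.

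\medskip

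\noindent\emph{Main obstacle.} There is no hard analytic step; the only subtle point is the well-definedness clause in Step 2, which requires that the $\widehat{Q}$ returned by a reasonable approximate k-means does not waste a center by duplicating it. Strictly speaking one only needs the weaker statement that rows of $\widehat{Q}$ indexed by different labels used by $\widehat{Z}$ can always be taken distinct without increasing the objective, which is immediate; an unused label can then be relabeled freely when building $J$. Aside from this bookkeeping, the argument is three lines of inequality manipulation plus a pigeonhole contradiction.
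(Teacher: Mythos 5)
First, note that the paper does not prove this lemma itself: it imports it verbatim from \citet{lei2013consistency} (their error bound for approximate $k$-means), so the comparison is really against that standard argument. Your Step 1 is exactly that argument and is correct: feasibility of $(Z,Q)$ gives $\|\widehat{U}-\widetilde{U}\|_F^2\le(1+\epsilon)\|\widehat{U}-U\|_F^2$, the inequality $(x+y)^2\le 2x^2+2y^2$ gives $\|\widetilde{U}-U\|_F^2\le(4+2\epsilon)\|\widehat{U}-U\|_F^2$, and Markov-type counting over the rows in $S_a$ gives the stated bound. Your injectivity argument in Step 2 (two true clusters cannot share an estimated label, by the triangle inequality and $\delta_a,\delta_{a'}\le\|Q_{a*}-Q_{a'*}\|$, using the strictness of $\|\widetilde{U}_{b*}-U_{b*}\|<\delta_a/2$ off $S_a$) is also correct and is the same as in the source.

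The gap is in your well-definedness step. You need that all $b\in T_a\setminus S_a$ receive the \emph{same} label under $\widehat{Z}$, and your argument for this is that if $c(b)\ne c(b')$ then $\|\widehat{Q}_{c(b)*}-\widehat{Q}_{c(b')*}\|\le\delta_a$, which you say contradicts the rows of $\widehat{Q}$ being distinct. It does not: two distinct centers can perfectly well lie within distance $\delta_a$ of each other, and the auxiliary claim that a $(1+\epsilon)$-approximate $k$-means solution cannot have duplicate (or nearby) centers is neither proved nor true in general. The correct completion uses only what you have already established: let $L_a\subseteq[k]$ be the set of labels that $\widehat{Z}$ assigns to points of $T_a\setminus S_a$. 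Condition (\ref{eq:cond1}) plus Step 1 gives $|S_a|<n_a^*$, so each $L_a$ is non-empty; your injectivity argument shows the $L_a$ are pairwise disjoint. Then $k$ pairwise disjoint non-empty subsets of the $k$-element label set must each be a singleton, which is the well-definedness you need, and the resulting bijection $a\mapsto L_a$ defines $J$. You allude to ``a pigeonhole contradiction'' at the end, but the pigeonhole must be applied here, to the disjoint label sets, rather than to any property of the $k$-means objective; as written, Step 2 does not go through without this replacement.
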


Now we are ready to prove Theorem 3.1. In the following, we use a generic notation $C$ to denote absolute constants, whose value may vary from context to context.  By the above Lemma 1 and Lemma 1 from the paper, there exists an orthogonal matrix $O \in \mathbb{R}^{k \times k}$ such that 
\begin{equation} \label{eq:uhat_err}
\|\widehat{U} - UO\|_F^2 \le \frac{C\sigma^2 kn(\sigma_k^2(X)+p)}{\sigma_k^4(X)} \le \frac{C \sigma^2 k (\alpha n \sigma_k^2(B) + p)}{\alpha^2 n \sigma_{k}^4(B)},  
\end{equation}
with probability greater than $1-\exp(-Cn)$. For $UO = ZQO := Z\widetilde{Q}$,  Lemma 1 from the paper implies
\[ \|\widetilde{Q}_{b*} -  \widetilde{Q}_{a*}\| = \|Q_{b*} -  Q_{a*}\| \ge \frac{1}{\sqrt{n_a^*}}  \]
for all $b \neq a$. Applying Lemma \ref{lm:appkmeans} to $\widehat{U}$ and $UO$ with $\delta_a = 1/\sqrt{n_a^*}$, we obtain
\[ \sum_{a=1}^{k} \frac{|S_a|}{n_a^*} \le C \|\widehat{U} - UO\|^2_F. \] 
Let $\mathcal{E}$ be the event that (\ref{eq:uhat_err}) holds. Then on event $\mathcal{E}$, 
$$\max_{a \in [k]} \left\{ \frac{|S_a|}{n_a^*} \right\} \le \frac{C\sigma^2 k (\alpha n \sigma_k^2(B) + p)}{\alpha^2 n \sigma_k^2(B)} \triangleq R.$$
When $\sigma_k(B) \ge C \max \left\{ \sigma \sqrt{\frac{k}{\alpha}}, \left( \frac{\sigma^2 k p }{\alpha^2 n} \right)^{1/4} \right\}$ for a sufficiently large constant $C$, condition (\ref{eq:cond1}) satisfies. Without loss of generality, we assume the permutation matrix in Lemma \ref{lm:appkmeans} is identity matrix. Consequently, 
$|T_a \cap G_a^c| \le |S_a| \le n_a^* R$ for all $a \in [k]$. Note that $G_a \cap T_a^c \subseteq \bigcup_{b \in [k]} (T_b \cap G_b^c)$. We have  $|G_a \cap T_a^c | \le \sum_{b \in [k]} |T_b \cap G_b^c| \le n R,$
which implies 
\[ \frac{|G_a \cap T_a^c|}{|G_a|} \le \frac{nR}{|G_a \cap T_a|} \le \frac{nR}{n_a^*(1-R)} \le \frac{2}{\alpha} R \]
for all $a \in [k]$. Here the last inequality is due to the condition. Therefore, the desired result holds on event $E$.

\subsection{Proof of Theorem 3.2}
Let us first introduce some notations. Let $T_a \subseteq [n]$ be the true clusters. $G_a \subseteq [n]$ be the estimated clusters with cardinality $n_a$. For any $a \in [k]$, define $U_{a} =  \sum_{i \in G_a} w_i$ and $V_a= \sum_{i \in G_a} w_i^2$. For any sequence $b$, define $\bar{b}_a = \frac{1}{n_a} \sum_{i \in G_a} b_i$ and $\bar{b} = \frac{1}{n} \sum_{i=1}^{n} w_i$. With a little abuse of notation, we also define $\bar{\theta}_a = \frac{1}{n_a} \sum_{i \in G_a} \theta_{z_i}$. The analyses below are for a fixed $j$ and we denote by $x_i = Y_{ij}$, $\theta_a = B_{aj}$ and $w_{i} = W_{ij}$.  We also need the following two lemmas on the concentration behavior of $w_i$.

\begin{lemma} \label{lm:tech1}
There is a constant $c$ such that the following holds with probability greater than $1-\exp(-c n)$,
\begin{equation} \label{eq:tech1.1}
\left| \sum_{i=1}^{n} w_i^2 - n \sigma^2 \right| \le 0.1 n \sigma^2,
\end{equation}
\end{lemma}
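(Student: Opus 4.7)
The plan is to reduce the claim to a standard concentration inequality for sums of independent sub-exponential random variables, exploiting that each coordinate $w_i = W_{ij}$ inherits sub-Gaussianity from the ambient vector $W_i$. By taking $\gamma = t e_j$ in the vector sub-Gaussian assumption, we see that $\E \exp(t w_i) \le \exp(t^2 \sigma^2/2)$ for all $t$, so each $w_i$ is $\sigma$-sub-Gaussian. A standard consequence is that $w_i^2 - \E[w_i^2]$ is sub-exponential with Orlicz norm $\|w_i^2 - \E w_i^2\|_{\psi_1} \lesssim \sigma^2$, equivalently satisfying a Bernstein-type moment bound with variance proxy of order $\sigma^4$ and scale parameter of order $\sigma^2$.

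Next, since the $w_i$ are independent, I would invoke Bernstein's inequality for sums of independent centered sub-exponential random variables:
\begin{equation*}
\Prob\left( \left| \sum_{i=1}^{n} (w_i^2 - \E w_i^2) \right| \ge t \right) \le 2 \exp\left( -c \min\left( \frac{t^2}{n \sigma^4}, \frac{t}{\sigma^2} \right) \right).
\end{equation*}
Choosing $t = 0.1 n \sigma^2$, both terms in the minimum scale linearly in $n$, so the right-hand side becomes at most $2\exp(-c'n)$ for an absolute constant $c'$, yielding the desired deviation bound of $0.1 n \sigma^2$ with probability at least $1 - \exp(-cn)$.

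The only subtlety is reconciling $\E[w_i^2]$ with the target $\sigma^2$ appearing in the statement. Since the lemma is an auxiliary concentration step used inside the feature-selection analysis (where $j$ is fixed and $\sigma$ plays the role of the sub-Gaussian parameter at coordinate $j$), I would interpret $\sigma$ here as the marginal sub-Gaussian scale so that $\E[w_i^2] \le \sigma^2$. Any gap between $\E[w_i^2]$ and $\sigma^2$ is absorbed into the constant slack $0.1 n \sigma^2$; alternatively, in the canonical Gaussian case $\E[w_i^2] = \sigma^2$ exactly. The main (and only) technical point is thus correctly invoking the sub-exponential Bernstein bound; there is no combinatorial difficulty, and no union bound over $j$ is needed at this stage because the statement is coordinate-wise.
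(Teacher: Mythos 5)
Your proof is correct and follows essentially the same route as the paper, which simply notes that $\sum_{i=1}^n w_i^2$ is a sum of sub-exponential variables with mean $n\sigma^2$ and invokes Bernstein's inequality; your write-up is a more careful expansion of that one-line argument. You are also right to flag the identification of $\E[w_i^2]$ with $\sigma^2$ — the two-sided bound (in particular the lower bound $\sum_i w_i^2 \ge 0.9\, n\sigma^2$) genuinely requires $\E[w_i^2]$ to equal (or be within the slack of) $\sigma^2$, a point the paper glosses over and which holds exactly in the Gaussian case it has in mind.
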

\begin{proof}[Proof of Lemma \ref{lm:tech1}]
Note that $\sum_{i=1}^{n} w_i^2$ are sub-exponential random variables with expectation $n\sigma^2$. Bernstein equality gives us the desired result. 
\end{proof}

\begin{lemma} \label{lm:tech3}
Let $a_1, a_2, \cdots, a_n$ and $b_1, b_2, \cdots, b_m$ be two sequencesF of real numbers. Then $\sum_{i=1}^{m} (a_i - \bar{a})^2 = \sum_{i=1}^{m} a_i^2 - m \bar{a}^2$ and
\[ \sum_{i=1}^{m} a_i^2 \sum_{i=1}^{m} b_i^2 - \left( \sum_{i=1}^{m} a_i b_i\right)^2  = \frac{1}{2} \sum_{1\le i,j \le m} (a_i b_j - a_j b_i)^2 \]
\end{lemma}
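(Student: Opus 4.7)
The plan is to prove two elementary algebraic identities by direct expansion; neither requires any probabilistic input or appeal to earlier results in the paper. I would treat the two parts of the lemma independently and rely purely on the definitions.

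For the first identity, the plan is to substitute $\bar{a} = \frac{1}{m}\sum_{i=1}^m a_i$ into $(a_i - \bar{a})^2 = a_i^2 - 2 a_i \bar{a} + \bar{a}^2$ and sum over $i$. The cross term becomes $-2\bar{a}\sum_i a_i = -2 m \bar{a}^2$ and the constant term sums to $m\bar{a}^2$, so the net result is $\sum_i a_i^2 - m\bar{a}^2$, as required. This is the standard variance-decomposition identity and will take one display to write.

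For the second (Lagrange's) identity, the plan is to expand the right-hand side directly:
\[
\tfrac{1}{2}\sum_{1 \le i,j \le m}(a_i b_j - a_j b_i)^2 = \tfrac{1}{2}\sum_{i,j}\bigl(a_i^2 b_j^2 - 2 a_i b_j a_j b_i + a_j^2 b_i^2\bigr).
\]
The first and third terms each sum (by symmetry in the dummy indices $i,j$) to $\sum_i a_i^2 \sum_j b_j^2$, contributing $\sum_i a_i^2 \sum_j b_j^2$ after the factor of $\tfrac{1}{2}$ and doubling. The middle term factors as $-\bigl(\sum_i a_i b_i\bigr)\bigl(\sum_j a_j b_j\bigr) = -\bigl(\sum_i a_i b_i\bigr)^2$. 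Combining gives the left-hand side.

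There is no real obstacle here; both are pure bookkeeping. The only mild subtlety is not losing the factor of $\tfrac{1}{2}$ in Lagrange's identity — it is precisely what accounts for the fact that the sum over unordered pairs $\{i,j\}$ with $i \ne j$ is doubled by summing over ordered pairs, while the diagonal $i = j$ terms vanish because $a_i b_i - a_i b_i = 0$. I would remark on this explicitly to make the cancellation transparent.
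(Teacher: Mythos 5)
Your proof is correct: both identities follow by the direct expansions you give, and your bookkeeping of the factor of $\tfrac{1}{2}$ in Lagrange's identity is right. The paper itself states this lemma without proof (treating it as elementary), so your direct-expansion argument is exactly the intended justification; the only thing worth noting is that the statement's ``$a_1,\dots,a_n$'' is a typo for ``$a_1,\dots,a_m$'', which your proof implicitly and correctly assumes.
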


\begin{proof}[Proof of Part (a)] Now we are ready to analyze the score for variable $j$. Let us first upper bound the conditional variance $c_j$. Using the fact that $(u+v)^2 \le 2u^2 + 2v^2$, we have
\begin{equation} \label{eq:condvar}
c_j = \sum_{a=1}^{k} \sum_{i \in G_a} (x_i - \bar{x}_a)^2 \le 2 \sum_{a=1}^{k} \sum_{i \in G_a} (\theta_{z_i} - \bar{\theta}_a)^2 + 2 \sum_{a=1}^{k} \sum_{i \in G_a} (w_i - \bar{w}_a)^2 
\end{equation}
By Lemma \ref{lm:tech3}, the first term of the right most hand side of (\ref{eq:condvar}) equals to
\[ 2\sum_{a=1}^{k} \left[ \sum_{b=1}^{k} n_{ba} \theta_b^2 - \frac{1}{n_a} \left( \sum_{b \in [k]} n_{ba} \theta_{b} \right)^2  \right] = \sum_{a=1}^{k} \sum_{u \neq v} \frac{n_{ua} n_{va}}{n_a} (\theta_u - \theta_v)^2 \]
The second term of of the right most hand side of (\ref{eq:condvar}) can be upper bounded by 
\[ 2\sum_{a=1}^{k} \sum_{i \in G_a} w_i^2 \le 2\sum_{i=1}^{n} w_i^2 \le 2.2 n \sigma_j^2  \]
on event $\mathcal{E}$, where the last inequality is due to Lemma \ref{lm:tech1}. Thus, we obtain
\[ c_j \le \sum_{u \neq v} \left( \sum_{a=1}^{k} \frac{n_{ua} n_{va}}{n_a} \right) (\theta_u - \theta_v)^2 + 2.2n \sigma_j^2 \]
on event $\mathcal{E}$ when $j \in S$. \\

Next, we lower bound the marginal variance $m_j$. Let $\bar{x} = \frac{1}{n} \sum_{i=1}^{n} x_i$, then we have 
\begin{equation*}
m_j = \sum_{i=1}^{n} (x_i - \bar{x})^2 = \sum_{i=1}^{n} (\theta_{z_i} - \bar{\theta} + w_i- \bar{w})^2 
\end{equation*}
Using the fact that $(u+v)^2 \ge \frac{1}{2} u^2 - 2v^2$, we obtain
\begin{equation*}
m_j \ge \frac{1}{2} \sum_{i=1}^{n} (\theta_{z_i}-\bar{\theta})^2 - 2 \sum_{i=1}^n (w_i - \bar{w})^2 \ge \frac{1}{2} \sum_{a=1}^{k} n_a^* (\theta_a-\bar{\theta})^2 - 2.2n \sigma_j^2.
\end{equation*}
Here the last inequality is due to $\sum_{i=1}^n (w_i - \bar{w})^2 \le \sum_{i=1}^n w_i^2 \le 1.1 n\sigma_j^2$ on event $\mathcal{E}$. Note that $\bar{\theta} = \frac{1}{n} \sum_{i=1}^{n} \theta_i = \frac{1}{n} \sum_{a=1}^{k} n_a^* \theta_a $. Lemma \ref{lm:tech3} implies
\[ n \sum_{a=1}^{k} n_a^* (\theta_a-\bar{\theta})^2 = \left( \sum_{a=1}^{k} n_a^* \right) \left( \sum_{a=1}^{k} n_a^* \theta_a^2 \right) - \left(\sum_{a=1}^{k} n_a^* \theta_a \right)^2 = \frac{1}{2} \sum_{u,v} n_u^* n_v^* (\theta_u - \theta_v)^2. \]
Since $B \le \frac{\alpha}{16}$, 
\[ \sum_{a=1}^{k} \frac{n_{ua} n_{va}}{n_a} = \sum_{a \neq v} \frac{n_{ua}}{n_a} n_{va} + \frac{n_{uv} n_{vv}}{n_v} \le B n_v^* +  B n_u^* \le \frac{1}{8n} n_u^* n_v^*  \]
Consequently,
\[ c_j \le \sum_{u \neq v} \left(\frac{n_{u}^* n_{v}^*}{8n} \right) (\theta_u - \theta_v)^2 + 2.2n \sigma_j^2 = \frac{1}{4} \sum_{a=1}^{k} n_a^* (\theta_a-\bar{\theta})^2 + 2.2n \sigma_j^2 \le 0.9 m_j, \]
provided $\textrm{SNR} \ge 21$. 
\end{proof}

\begin{proof}[Proof of Part (b)]
When $j \in S^c$, the conditional variance of variable $j$ can be simplified to
\[ c_j = \sum_{a=1}^{k} \sum_{i \in G_a} (w_i - \bar{w}_a)^2 = \sum_{a=1}^{k} \left( \sum_{i \in G_a} w_i^2 - n_a \bar{w}_a^2\right) = \sum_{i=1}^{n} w_i^2 - \sum_{a=1}^{k} \frac{1}{n_a} W_{G_a}^2 \]
Now we need an upper bound of $\sum_{a=1}^{k} \frac{1}{n_a} W_{G_a}^2$. The key difficulty is the possible dependence between the partition $G$ and $w_i$. When $\ell(G, T) \le \alpha/128$, we have the following lemma, whose proof is deferred to Section \ref{sec:tech}.
\begin{lemma} \label{lm:tech2}
There is a constant $c$ such that
\begin{equation} \label{eq:tech4}
\sum_{a=1}^{k} \frac{1}{n_a} W_{G_a}^2 \le 0.18 \sigma^2  n \;\; \text{ for all } G \text{ with } \ell(G, T) \le \alpha/128
\end{equation}
with probability greater than $1-\exp(-c \alpha n)$.
\end{lemma}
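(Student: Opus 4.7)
My plan is to view $Q(G):=\sum_{a=1}^k W_{G_a}^2/n_a$ as a quadratic form in the noise vector and bound it uniformly over all admissible $G$ by combining a sub-Gaussian quadratic-form inequality for a single $G$ with a combinatorial union bound over the admissible class. Writing $w=(w_1,\ldots,w_n)^\top$ and $\mathbf{1}_{G_a}\in\{0,1\}^n$ for the indicator vector of cluster $G_a$, I would first observe
\[ Q(G)=\sum_{a=1}^k\frac{\langle\mathbf{1}_{G_a},w\rangle^2}{n_a}=w^\top P_G w,\qquad P_G:=\sum_{a=1}^k\frac{\mathbf{1}_{G_a}\mathbf{1}_{G_a}^\top}{n_a}. \]
Because the $G_a$ are disjoint with $|G_a|=n_a$, $P_G$ is the orthogonal projection onto $\mathrm{span}(\mathbf{1}_{G_1},\ldots,\mathbf{1}_{G_k})$; in particular $\mathrm{rank}(P_G)=\|P_G\|_F^2=k$ and $\|P_G\|_{\mathrm{op}}=1$.

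Next, for each fixed $G$ I would invoke the Hanson--Wright inequality for sub-Gaussian vectors (the $w_i$ are independent sub-Gaussian with parameter $\sigma$) to obtain
\[ \Pr\bigl(w^\top P_G w\ge k\sigma^2+t\bigr)\le 2\exp\!\bigl(-c\,\min(t^2/(k\sigma^4),\,t/\sigma^2)\bigr). \]
Setting $t=0.17\,n\sigma^2$ and using the paper's standing assumption that $k$ is small compared with $n$, the right-hand side is at most $2\exp(-c_1 n)$ for a universal $c_1>0$, so $Q(G)\le 0.18\,n\sigma^2$ holds with that probability for the fixed $G$.

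I would then promote this to a bound uniform in $G$ by parametrizing each admissible $G$ through the set of mislabeled indices $M=\{i:g_i\ne z_i\}$, taken after the best relabeling of clusters. The hypothesis $\ell(G,T)\le\alpha/128$ forces $|M|\le\alpha n/128$, and once $M$ is chosen, the labels inside $M$ can be reassigned in at most $k^{|M|}$ ways. A Stirling bound thus gives the total number of admissible $G$ as at most $\sum_{m\le\alpha n/128}\binom{n}{m}k^m\le\exp(C_k\alpha n)$ for a constant $C_k$ depending only on $k$. A union bound against the single-$G$ estimate then yields
\[ \Pr\Bigl(\sup_{G:\,\ell(G,T)\le\alpha/128}Q(G)>0.18\,n\sigma^2\Bigr)\le\exp(-c_1 n+C_k\alpha n)\le\exp(-c\alpha n), \]
provided $\alpha$ lies below a constant depending only on $c_1/C_k$.

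The main obstacle is precisely this uniformity step. A pointwise estimate would leave room for an adversarial $G$ correlated with $w$ to inflate the quadratic form, and the full class of $k$-partitions of $[n]$ has cardinality $k^n$, against which no $\exp(-cn)$ deviation could survive. The argument works only because the hypothesis $\ell(G,T)\le\alpha/128$ localizes $G$ to a subclass of metric-entropy order $\alpha n$, which is exactly the right scale to be absorbed by the Hanson--Wright exponent while still leaving an $\exp(-c\alpha n)$ cushion. A minor concern is that the noise is only sub-Gaussian rather than Gaussian, but this is accommodated automatically by the sub-Gaussian version of Hanson--Wright used throughout.
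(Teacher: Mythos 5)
Your overall architecture --- a tail bound for a fixed partition followed by a union bound over the class of partitions with $\ell(G,T)\le\alpha/128$, whose metric entropy is of order $\alpha n\log(k/\alpha)$ --- is exactly the paper's, and your diagnosis of why the localization hypothesis is indispensable is correct. The difference is the single-partition tail bound: you invoke Hanson--Wright on the rank-$k$ projection $P_G$, while the paper computes the moment generating function of $W_{G_a}^2/(2\sigma^2 n_a)$ explicitly (it is at most $(1-\lambda)^{-1/2}$) and runs a Chernoff bound, obtaining a tail exponent of $0.99t/(2\sigma^2)-k\log 10$, i.e.\ a constant of essentially $1/2$ in front of $t/\sigma^2$.

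That constant is where your argument has a genuine gap. The entropy of the admissible class is $\gamma n\log(ek/\gamma)$ with $\gamma=\alpha/128$, which (using $k\le 1/\alpha$) is a constant multiple of $n$ when $\alpha$ is of constant order --- e.g.\ about $0.03n$ for balanced clusters with $k=2$, and bounded by $0.06n$ in general, as the paper notes. Your fixed threshold $t=0.17n\sigma^2$ gives a single-partition exponent $c_1 n=c\cdot 0.17\,n$ with $c$ the unspecified universal Hanson--Wright constant; for the union bound to close you need roughly $c\cdot 0.17>0.06$, i.e.\ $c>0.35$, which generic statements of Hanson--Wright for sub-Gaussian vectors do not supply. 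Your escape hatch --- assuming $\alpha$ lies below a constant depending on $c_1/C_k$ --- is not available: the lemma must hold for all $\alpha\le 1/k$, and it is precisely the constant-$\alpha$ (e.g.\ balanced-cluster) regime in which it is used downstream. A smaller inaccuracy: the cardinality of the admissible class is $\exp(O(\alpha n\log(k/\alpha)))$, not $\exp(C_k\alpha n)$ with $C_k$ depending only on $k$; the extra $\log(1/\alpha)$ is harmless only because $\alpha\log(1/\alpha)$ is bounded. The fix is either to use the sharp chi-square-type MGF as the paper does, or to let the threshold scale with the entropy --- the paper takes $t=3\sigma^2\gamma n\log(ek/\gamma)$ and then verifies $\gamma\log(ek/\gamma)\le 0.06$ so that $t\le 0.18\sigma^2 n$ --- rather than fixing $t=0.17n\sigma^2$ and relying on an unspecified constant to dominate a fixed fraction of $n$.
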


\noindent Then, Lemma $\ref{lm:tech2}$ and Lemma $\ref{lm:tech1}$ imply
$$\sum_{a=1}^{k} \frac{1}{n_a} W_{G_a}^2 \le 0.18 n \sigma^2 \le 0.2 \sum_{i=1}^{n} w_i^2$$ 
with probability greater than $1-\exp(-c\alpha n)$ for some constant $c$. Consequently, we have $c_j \le 0.8 \sum_{i=1}^{n} w_i^2$\\

From the proof of Theorem 3.2, when $j \in S^c$, we have
\begin{equation*}
m_j = \sum_{i=1}^{n} (w_i - \bar{w})^2 = \sum_{i=1}^{n} w_i^2 - n \bar{w}^2.
\end{equation*}
Since $ \sqrt{n} \bar{w_i}$ is a standard normal random variable, then $ \mathbb{P} \left\{ (\sqrt{n} \bar{w})^2 \ge 0.01 n \sigma^2 \right\} \le \exp(-c_1n) $
for some constant $c_1$. This, together with Lemma \ref{lm:tech1}, implies $$m_j \ge \sum_{i=1}^{n} w_i^2 - 0.01 n \sigma^2 \ge 0.98 \sum_{i=1}^{n} w_i^2.$$
The proof is complete.
\end{proof}

\section{Proof of Technical Lemmas}\label{sec:tech}

\begin{proof}[Proof of Lemma \ref{lm:sintheta}]
Lemma \ref{lm:sintheta} is essentially the Theorem 3 of \citep{cai2016rate}. Here we slightly modify their proof to obtain an in-probability upper bound. We first introduce some notations. For two $n \times k$ matrices $U$ and $\hat{U}$ with orthogonal columns, let $\sigma_1 \ge \sigma_2 \ge \cdots \ge \sigma_k \ge 0$ be the singular values of $U\hat{U}^{T}$. Then we define 
\[ \Theta(U,\hat{U}) = \textrm{diag}(\cos^{-1}(\sigma_1), \cdots, \cos^{-1}(\sigma_k)) \]
as the principal angles between $U$ and $\hat{U}$. And we use $\sin \Theta(U,\hat{U})$ to measure the distance between the column spaces of $U$ and $\hat{U}$. The $\sin \Theta$ distance has the following property. 
\begin{equation} \label{eq:sin1}   
\inf_{O \in \mathbb{O}_k} \| \hat{U} - U O \|_F \le \sqrt{2k} \| \sin \Theta(U, \hat{U}) \|.
\end{equation}
For any matrix $A \in \mathbb{R}^{n \times p}$, we denote $\mathbb{P}_A \in \mathbb{R}^{n \times n}$ as the projection matrix onto the column space of A.  Given the singular value decomposition of $A = UDV^{T}$ with $D$ non-singular, the projection matrix $\mathbb{P}_A$ equals to $U U^{T}$. To better present the results, we use a generic notation $C$ to denote absolute constants, whose value may vary from context to context. 

Now we are ready to prove the lemma. Without loss of generality, we assume $\sigma=1$. Otherwise, we can re-scale the signal and the noise matrix by $1/\sigma$. By Proposition 1 in \citep{cai2016rate}, we have
\[ \| \sin \Theta(\hat{U}, U) \|^2 \le \frac{\sigma_k^2(U^TY) \|U_\perp^{T} Y  \mathbb{P}_{U^TY}  \|^2 }{(\sigma_k^2(U^TY) - \sigma_{k+1}^2(Y))^2 }. \]
Following the proof of Theorem 3 in \citep{cai2016rate}, define the event $Q$ as
\begin{eqnarray*}  
Q &=& \left\{ \sigma_k^2(U^{T} Y) \ge \sigma_k^2(X) + p - \frac{1}{3} \sigma_k^2(X), \sigma_{k+1}^2(Y) \le p + \frac{1}{3} \sigma_k^2 (X),  \right. \\ 
&& \left. \|\mathbb{P}_{U^TY}\| \le \sqrt{\sigma_k^2(X)+p} \right\}. 
\end{eqnarray*}
Noting that $\| \sin \Theta(U, \hat{U}) \| \le 1$, the result is trivial when $\sigma_k^2(X) < C(\sqrt{np}+n)$ for some constant $C$.  Thus, it is sufficient to consider the case that $\sigma_k^2(X) \ge C(\sqrt{np}+n)$ for some large constant $C$, Lemma 4 in \citep{cai2016rate} gives us
\[ \mathbb{P} \left\{ Q^c \right\} \le C \exp \left( - \frac{C\sigma_k^4(X)}{\sigma_k^2(X) + p}  \right) \le \exp(- Cn). \]
On event $Q$, we have
\begin{equation} \label{eq:sin2} 
\| \sin \Theta(\hat{U}, U) \|^2 \le \frac{C(\sigma_k^2(X) + p)\|U_\perp^{T} Y  \mathbb{P}_{U^TY}  \|^2 }{\sigma_k^4(X)}. 
\end{equation}
Using Lemma 4 in \citep{cai2016rate} again, there exists a constant $C$ such that
\begin{equation} \label{eq:sin3} 
\mathbb{P} \left\{ \|U_\perp^{T} Y  \mathbb{P}_{U^TY}\| \ge C\sqrt{n} \right\} \le C \exp \left( - C n \right), 
\end{equation}
where we have used the fact that $p > n$. Combining the results of (\ref{eq:sin1}), (\ref{eq:sin2}) and (\ref{eq:sin3}), we obtain the desired result.
\end{proof}

\begin{proof}[Proof of Lemma \ref{lm:tech2}]
For a given $S \subseteq [n]$, $W_S = \sum_{i \in S} w_i$ is a Gaussian random variable with variance $\sigma^2 |S|$. Then $W_S^2$ is a sub-Exponential random variable satisfies 
$$\mathbb{E}\exp\left(\frac{\lambda W_S^2}{2\sigma^2 |S|}\right) \le \frac{1}{\sqrt{1-\lambda}}$$ for all $\lambda \in [0,1]$. Then by Chernoff bound, for a fixed partition $G$, we have
\[ \mathbb{P} \left\{ \sum_{a=1}^{k} \frac{1}{n_a} W_{G_a}^2 \ge t \right\} \le \mathbb{E} \exp \left( - \frac{\lambda}{2\sigma^2} + \sum_{a=1}^{k} \frac{\lambda W_{G_a}^2}{2\sigma^2 n_a} \right) \le \exp \left( - \frac{0.99 t}{2\sigma^2} + k \log 10 \right), \]
where we choose $\lambda = 0.99$ in the last inequality. By union bound,
\[ \mathbb{P} \left\{ \exists G \in \mathcal{G}, s.t. \sum_{a=1}^{k} \frac{1}{n_a} W_{G_a}^2 \ge t \right\} \le \exp \left( - \frac{0.99 t}{2\sigma^2} + k \log 10 + \log |\mathcal{G}|\right). \]
Now let us upper bound the cardinality of $\mathcal{G}$. First, there are 
\[ \prod_{a=1}^{k}{n_a^* \choose \gamma n_a^*} \le \prod_{a=1}^{k} \exp \left( \gamma n_a^* \log (e\gamma) \right) = \exp \left( \gamma n \log(e/\gamma) \right)  \]
possible choices of the elements that belongs to $\bigcup_{a=1}^{k} \left( T_a \cap G_a \right)$. For those at most $\gamma n$ elements that are not in $\bigcup_{a=1}^{k} \left( T_a \cap G_a \right)$, each of them have $k$ possible choices. Thus, the number of partitions $G$ is at most
\[ k^{\gamma n} \exp \left( \gamma n \log(e/\gamma) \right) = \exp(\gamma n \log(e k / \gamma)).  \]
Consequently, we obtain
\[ \mathbb{P} \left\{ \exists G \in \mathcal{G}, s.t. \sum_{a=1}^{k} \frac{1}{n_a} W_{G_a}^2 \ge 3\sigma^2 \gamma n \log (e k/ \gamma) \right\} \le \exp \left( - 0.4 \gamma n \log(ek/\gamma)\right). \]
Plug $\gamma = \frac{\alpha}{128}$ into above equality and note that $\gamma \log (ek/\gamma) \le \gamma\log(e/(\alpha \gamma)) \le 0.06$, the proof is complete.
\end{proof}

\end{document}